\newcommand{\bi}{\begin{itemize}}
\newcommand{\ei}{\end{itemize}}
\newcommand{\beq}{\begin{equation}}
\newcommand{\eeq}{\end{equation}}
\newcommand{\bqn}{\begin{eqnarray*}}
\newcommand{\eqn}{\end{eqnarray*}}
\newcommand{\ba}{\begin{array}}
\newcommand{\ea}{\end{array}}
\newcommand{\bs}{\begin{small}}
\newcommand{\es}{\end{small}}
\newcommand{\nn}{\nonumber}
\newcommand{\yd}{{\cal D}}
\newtheorem{theorem}{Theorem}[section]
\newtheorem{lemma}[theorem]{Lemma}
\newtheorem{proposition}[theorem]{Proposition}
\newtheorem{assumption}[theorem]{Assumption}
\newcommand{\qed}{\nobreak \ifvmode \relax \else
      \ifdim\lastskip<1.5em \hskip-\lastskip
      \hskip1.5em plus0em minus0.5em \fi \nobreak
      \vrule height0.75em width0.5em depth0.25em\fi}
\DeclareMathOperator*{\argmin}{arg\,min}
\DeclareMathOperator*{\argmax}{arg\,max}
\def\defeq{{\stackrel{\Delta}{=}}}
\newcommand{\bm}{\boldmath}
\newcommand{\uu}{\mbox{\bm $u$}}
\newcommand{\bpsi}{\mbox{\bm $\psi$}}
\newcommand{\vv}{\mbox{\bm $v$}}
\newcommand{\cc}{\mbox{\bm $c$}}
\newcommand{\DD}{\mbox{\bm $D$}}
\begin{document}

\title{Target Localization in Wireless Sensor Networks using Error Correcting Codes}
\author{Aditya Vempaty\thanks{This work was supported in part by CASE: The Center for Advanced Systems and Engineering, a NYSTAR center for advanced technology at Syracuse University; AFOSR under Grant FA9550-10-1-0263; the National Science of Council (NSC) of Taiwan under Grants NSC NSC 102-2221-E-011 -006 -MY3 and NSC 101-2221-E-011-069-MY3. Parts of this work were presented at the 38th IEEE
International Conference on Acoustics, Speech, and Signal Processing (ICASSP~2013) and at the 13th International Symposium on Communications and Information Technologies (ISCIT~2013).

Part of Y. S. Han's work was completed during his visit of Syracuse University from 2012 to 2013. 

A. Vempaty and P. K. Varshney are with the  Dept. of Electrical Engineering and Computer  Science, Syracuse University, Syracuse, NY USA (email:\{avempaty, varshney\}@syr.edu).

Y. S. Han is with the Dept. of Electrical Engineering, National Taiwan University of Science and Technology, Taipei, Taiwan (email: yshan@mail.ntust.edu.tw).

Copyright (c) 2013 IEEE. Personal use of this material is permitted.  However, permission to use this material for any other purposes must be obtained from the IEEE by sending a request to pubs-permissions@ieee.org.}, {\it Student Member}, {\it IEEE}, Yunghsiang S. Han, {\it Fellow}, {\it IEEE},  Pramod~K.~Varshney,~{\it Fellow}, {\it IEEE}} \pubid{}

\date{}
\maketitle

\begin{abstract}
In this work, we consider the task of target localization using quantized data in Wireless Sensor Networks (WSNs). We propose a computationally efficient localization scheme by modeling it as an iterative classification problem. We design coding theory based iterative approaches for target localization where at every iteration, the Fusion Center (FC) solves an $M$-ary hypothesis testing problem and decides the Region of Interest (ROI) for the next iteration. The coding theory based iterative approach works well even in the presence of Byzantine (malicious) sensors in the network.  We further consider the effect of non-ideal channels. We suggest the use of soft-decision decoding to compensate for the loss due to the presence of fading channels between the local sensors and the FC. We evaluate the performance of the proposed schemes in terms of the Byzantine fault tolerance capability and probability of detection of the target region. We also present performance bounds which help us in designing the system. We provide asymptotic analysis of the proposed schemes and show that the schemes achieve perfect region detection irrespective of the noise variance when the number of sensors tends to infinity. Our numerical results show that the proposed schemes provide a similar performance in terms of Mean Square Error (MSE) as compared to the traditional Maximum Likelihood Estimation (MLE) but are computationally much more efficient and are resilient to errors due to Byzantines and non-ideal channels. \end{abstract}

\begin{keywords}
Target Localization, Wireless Sensor Networks, Error Correcting Codes, Byzantines
\end{keywords}

\section{Introduction}
\label{intro}
Wireless sensor networks (WSNs) have been extensively employed to monitor a region of interest (ROI) for reliable detection/estimation/tracking of events \cite{akyildiz_commag02,niu_tsp06,Wang_TMC12,Veeravalli&Varshney_12}. In this work, we focus on target localization in WSNs. Localization techniques proposed in the literature for sensor networks include direction of arrival (DOA), time of arrival (TOA) and time-difference of arrival (TDOA) based methods \cite{molnar_icassp01}\cite{Chen&etal:01Icassp}. Recent research has focused on developing techniques which do not suffer from imperfect time synchronization. Received signal strength based methods, which do not suffer from imperfect synchronization and/or extensive processing, have been proposed which employ least-squares or maximum-likelihood (ML) based source localization techniques \cite{hu_eurasip03}\cite{hu_tsp05}.  In WSNs, due to power and bandwidth constraints, each sensor, instead of sending its raw data, sends quantized data to a central observer or Fusion Center (FC). The FC combines these local sensors' data to estimate the target location. 

Secure localization is very important as potential malicious sensors may attempt to disrupt the network and diminish its capability. Only in the recent past, researchers have investigated the problem of security threats \cite{HKD06} on sensor  networks. We focus on one particular class of security attacks, known as the Byzantine data attack \cite{Vempaty_SPM13} (also referred to as the Data Falsification Attack). A Byzantine attack involves malicious sensors within the network which send false information to the FC to disrupt the global inference process.  In our previous work \cite{Vempaty_tsp}, we have analyzed target localization in WSNs in the presence of Byzantines. By considering the Posterior Cram\'{e}r Rao bound or Posterior Fisher Information as the performance metric, we analyzed the degradation in system performance in the presence of Byzantines. We showed that the FC becomes `blind' to the local sensor's data when the fraction of Byzantines is greater than $50\%$. When the FC becomes `blind', it is not able to use any information received from the local sensors and estimates the target location based only on prior information. In order to make the network robust to such attacks, we considered mitigation techniques. We proposed a Byzantine identification scheme which observes the sensors' behavior over time and identifies the malicious sensors. We also proposed a dynamic non-identical threshold design for the network which makes the Byzantines `ineffective'. 

An important element of WSNs is the presence of non-ideal wireless channels between sensors and the FC~\cite{varshney_spmag06}\cite{Pottie&Kaiser:00ACM}. These non-ideal channels corrupt the quantized data sent by the local sensors to the FC. This causes errors which deteriorates the inference performance at the FC. One way to handle the channel errors is to use error correcting codes~\cite{islam}\cite{LIN83}. In \cite{Luo&Min:13IJDSN}, target localization based on maximum likelihood estimation at the FC was considered and coding techniques were proposed to handle the effect of imperfect channels between sensors and fusion center. 

In this work, we propose the use of coding theory techniques to estimate the location of the target in WSNs. In our preliminary work \cite{Vempaty_ICASSP13_loc}\cite{Vempaty_ISCIT13}, we have shown the feasibility of our approach by providing simulation/numerical results. In this paper, we develop the fundamental theory and derive asymptotic performance results. We first consider the code design problem in the absence of channel errors and Byzantine faults. The proposed scheme models the localization problem as an iterative classification problem. The scheme provides a coarse estimate in a computationally efficient manner as compared to the traditional ML based approach. We present performance analysis of the proposed scheme in terms of detection probability of the correct region. We show analytically that the scheme achieves perfect performance in the asymptotic regime. We address the issues of Byzantines and channel errors subsequently and modify our scheme to handle them. The error correction capability of the coding theory based approach provides Byzantine fault tolerance capability and the use of soft-decoding at the FC provides tolerance to the channel errors. In the remainder of the paper, we refer to this coding theory based localization approach as ``coding approach". The schemes proposed in this paper show the benefit of adopting coding theory based techniques for signal processing applications.

The remainder of the paper is organized as follows: In Section \ref{prel}, we describe the system model used and lay out the assumptions made in the paper. We also present a brief overview of Distributed Classification Fusion using Error Correcting Codes (DCFECC) \cite{Wang_jsac05} and Distributed Classification Fusion using Soft-decision Decoding (DCSD) \cite{Wang_twc06} approaches. We propose our basic coding scheme for target localization in Section \ref{local_iter}. The performance of the proposed scheme in terms of region detection probability is analyzed in this section. We extend this scheme to the exclusion method based coding scheme in Section \ref{sec:byz} to mitigate the effect of Byzantines in the network. We present some numerical results showing the benefit of the proposed schemes compared to the traditional maximum likelihood based scheme. We also present a discussion on system design based on the performance analysis carried out in this section. We consider the presence of non-ideal channels in Section \ref{soft_dec} and modify our decoding rule to make it robust to fading channels. We conclude our paper in Section \ref{conc} with some discussion on possible future work.

\section{Preliminaries}
\label{prel}
\subsection{System model}
Let $N$ sensors be randomly deployed (not necessarily in a regular grid) in a WSN as shown in Fig. \ref{model} to estimate the unknown location of a target at $\theta = [x_t, y_t]$, where $x_t$ and $y_t$ denote the coordinates of the target in a 2-D Cartesian plane. We assume that the location of the sensors is known to the Fusion Center (FC). We also assume that the signal radiated from this target follows an isotropic power attenuation model~\cite{niu_tsp06}. The signal amplitude $a_i$ received at the $i^{th}$ sensor is given by
\begin{equation}
\label{sys_mod}
a_i^2=P_0\left(\frac{d_0}{d_i}\right)^n,
\end{equation} 
where $P_0$ is the power measured at the reference distance $d_0$, $d_i \neq 0$ is the distance between the target and the $i^{th}$ sensor whose location is represented by $L_i=[x_i,y_i]$ for $i= 1,2 \cdots, N$ and $n$ is the path loss exponent. In this work, without loss of generality, we assume $d_0=1$ and $n=2$. The signal amplitude measured at each sensor is corrupted by independent and identically distributed (i.i.d.) zero-mean additive white noise with complementary cumulative distribution function given by $\bar{F}(\cdot;\sigma^2)$:
\begin{equation}
\label{eq:measurement}
s_i=a_i+n_i,
\end{equation}
where $s_i$ is the corrupted signal at the $i^{th}$ sensor and the noise $n_i \sim \bar{F}(\cdot;\sigma^2)$ with variance $\sigma^2$. 

\begin{figure}[htb]
\centering
\includegraphics[width = 4in,height=!]{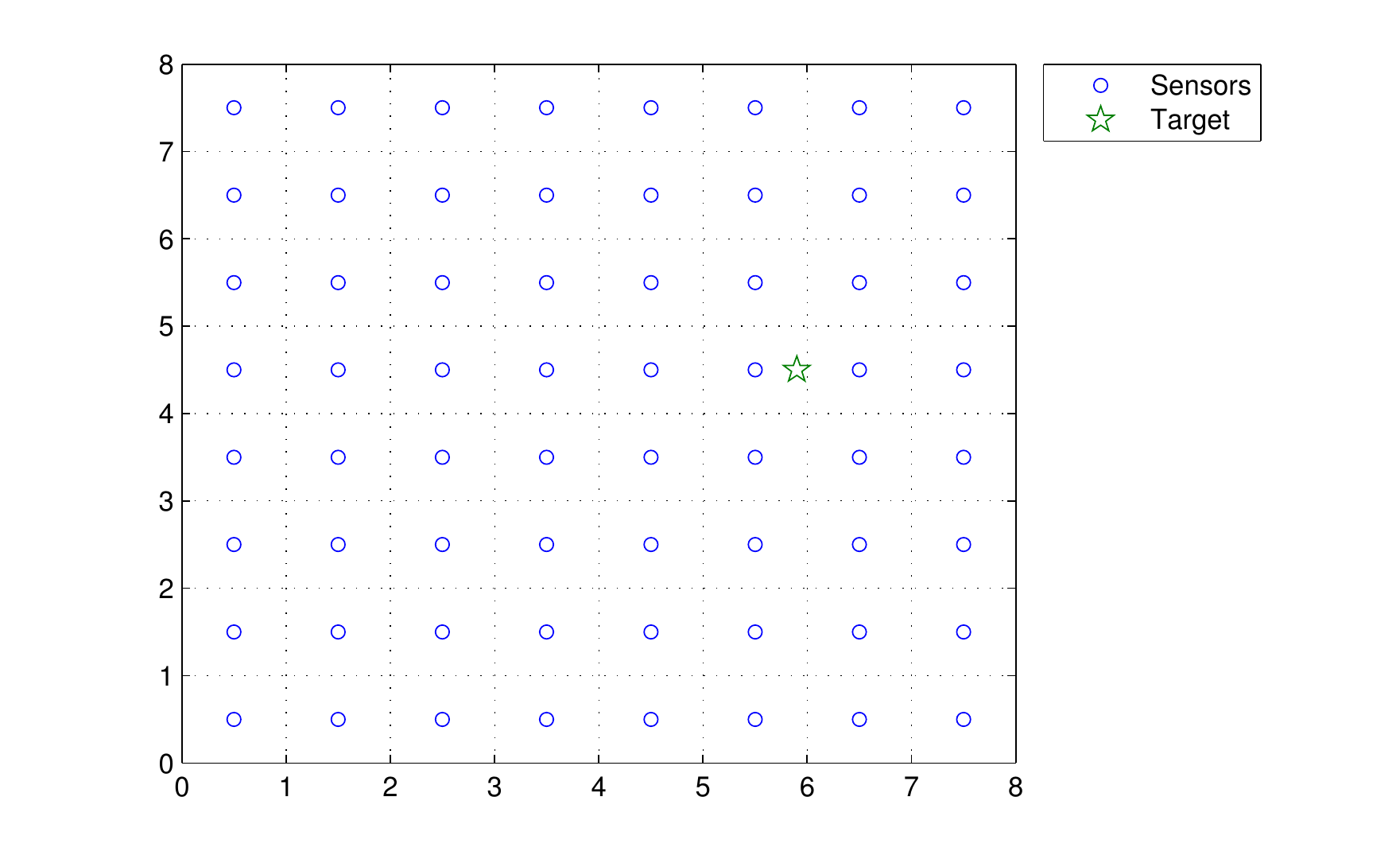}
\caption{Wireless sensor network layout for target localization}
\label{model}
\end{figure}

Due to energy and bandwidth constraints, the local sensors quantize their observations using threshold quantizers and send binary quantized data to the FC:
\begin{equation}
\label{threshold}
D_i =
\begin{cases}
0 & \text{$s_i < \eta_i$} \\
1 & \text{$s_i > \eta_i$}
\end{cases},
\end{equation}
where $D_i$ is the quantized data at the $i^{th}$ sensor and $\eta_i$ is the threshold used by the $i^{th}$ sensor for quantization. The FC fuses the data received from the local sensors and estimates the target location. Traditional target localization uses MLE~\cite{niu_tsp06}:
\begin{equation}
\label{MLE}
\hat{\theta}=\argmax_\theta{p(\uu|\theta)},
\end{equation}
where $\uu=[u_1, u_2,\cdots, u_N]$ is the vector of quantized observations received at the FC. As pointed out in the later sections, $\uu$ and $\DD$ can be different due to the presence of Byzantines and/or imperfect channels between local sensors and FC.

\subsection{An Overview of Distributed Classification Approaches}
\subsubsection{DCFECC \cite{Wang_jsac05}}
\label{DCFECC}
In this subsection, we give a brief overview of Distributed Classification Fusion using Error Correcting Codes (DCFECC) approach proposed in \cite{Wang_jsac05}. In \cite{Wang_jsac05}, the authors propose an approach for $M$-ary distributed classification using binary quantized data. After processing the observations locally, possibly in the presence of sensor faults, the $N$ local sensors transmit their local decisions to the FC. In the DCFECC approach, a code matrix $C$ is selected to perform both local decision and fault-tolerant fusion at the FC. The code matrix is an $M \times N$ matrix with elements $c_{(j+1)i} \in \{0,1\}$, $j=0,1, \cdots, M-1$ and $i=1,\cdots,N$. Each hypothesis $H_j$ is associated with a row in the code matrix $C$ and each column represents a binary decision rule at the local sensor. The optimal code matrix is designed off-line using techniques such as simulated annealing or cyclic column replacement \cite{Wang_jsac05}. After receiving the binary decisions $\uu$ from local sensors, the FC performs minimum Hamming distance based fusion and decides on the hypothesis $H_j$ for which the Hamming distance between row of $C$ corresponding to $H_j$ for $j=0,\cdots,M-1$ and the received vector $\uu$ is minimum.
It is important to note that the above scheme is under the assumption that $N>M$ and the performance of the scheme depends on the minimum Hamming distance $d_{min}$ of the code matrix $C$.

\subsubsection{DCSD\cite{Wang_twc06}}
\label{DCSD}
In this subsection, we present a brief overview of Distributed Classification using Soft-decision Decoding (DCSD) approach proposed in \cite{Wang_twc06}. This approach uses a soft-decision decoding rule as opposed to the hard-decision decoding rule used in DCFECC approach. The use of soft-decision decoding makes the system robust to fading channels between the sensors and the FC. The basic difference between the two approaches (DCFECC and DCSD) is the decoding rule. In DCFECC, the minimum Hamming distance rule is used. In the presence of fading channels, the received data at the FC is analog although the local sensors transmit quantized data based on the code matrix $C$ as described before. Then, the FC can use hard-decision decoding to determine the quantized data sent by the local sensors and use minimum Hamming distance rule to make a decision regarding the class. However, in \cite{Wang_twc06}, the authors show that the performance can deteriorate when hard-decision decoding is used. Instead, they propose a soft-decision decoding rule based on the channel statistics to make a decision regarding the class. We skip the derivation of the soft-decision decoding rule but present the decoding rule here for the case when binary quantizers are used at the local sensors, i.e., the elements of the code matrix are 0 or 1.

Let the analog data received at the FC from the local sensors be $\vv=[v_1,\cdots,v_N]$ when the local sensors transmit $\uu=[u_1,\cdots, u_N]$, where $u_i=0/1$ is decided by the code matrix $C$. For fading channels between the local sensors and the FC, $v_i$ and $u_i$ are related as follows

\begin{equation}
v_i=h_i(-1)^{u_i}\sqrt{E_b}+n_i,
\end{equation}
where $h_i$ is the channel gain that models the fading channel, $E_b$ is the energy per bit and $n_i$ is the zero mean additive white Gaussian noise. Define the reliability of the received data $v_i$ as

{\small \begin{eqnarray}
\psi_i=\ln{\frac{P(v_i|u_i=0)P(u_i=0|0)+P(v_i|u_i=1)P(u_i=1|0)}{P(v_i|u_i=0)P(u_i=0|1)+P(v_i|u_i=1)P(u_i=1|1)}}\label{reliability-DCFECC}
\end{eqnarray}}
for $i=\{1,\cdots,N\}$. Here $P(v_i|u_i)$ can be obtained from the statistical model of the fading channel considered and $P(u_i=d|s)$ for $s,d=\{0,1\}$ is the probability that the decision is $d$ given $s$ is present at the bit $i$ before local decision making and is given as follows

\begin{equation}
\label{eq:soft-dec1}
P(u_i=d|s)=\sum_{j=0}^{M-1}P(u_i=d|H_j)P_i(H_j|s).
\end{equation}
$P(u_i=d|H_j)$ depends on the code matrix while $P_i(H_j|s)$ is the probability that the hypothesis $H_j$ is true given $s$ is present at the bit $i$ (column $i$ of the code matrix) before local decision making, and can be expressed as
\begin{equation}
\label{eq:soft-dec2}
P_i(H_j|s)=\frac{P_i(s|H_j)}{\sum_{l=0}^{M-1}P_i(s|H_l)}
\end{equation}
where 
\begin{equation}
\label{eq:soft-dec3}
P_i(s|H_l)=
\begin{cases}
1,  \quad \text{if $c_{(l+1)i}=s$}\\
0,  \quad \text{if $c_{(l+1)i}\neq s$}
\end{cases}.
\end{equation}
Then the decoding rule is to decide the hypothesis $H_j$ where $j=\displaystyle\argmin_{0\leq j\leq M-1}d_F(\bpsi,\cc_{j+1})$. Here $d_F(\bpsi,\cc_{j+1})=\sum_{i=1}^N(\psi_i-(-1)^{c_{(j+1)i}})^2$ is the distance between $\bpsi=[\psi_1,\cdots,\psi_N]$ and $(j+1)^{th}$ row of $C$.

\section{Localization using iterative classification}
\label{local_iter}
In this section, we propose the localization scheme using iterative classification. Our algorithm is iterative in which at every iteration, the ROI is split into $M$ regions and an $M$-ary hypothesis test is performed at the FC to determine the ROI for the next iteration. The FC, through feedback, declares this region as the ROI for the next iteration. The $M$-ary hypothesis test solves a classification problem where each sensor sends binary quantized data based on a code matrix $C$. The code matrix is of size $M \times N$ with elements $c_{(j+1)i} \in \{0,1\}$, $j=0,1, \cdots, M-1$ and $i=1,\cdots,N$, where each row represents a possible region and each column $i$ represents $i^{th}$ sensor's binary decision rule. After receiving the binary decisions $\uu=[u_1,u_2,\cdots,u_N]$ from local sensors, the FC performs minimum Hamming distance based fusion. In this way, the search space for target location is reduced at every iteration and we stop the search based on a pre-determined stopping criterion. The optimal splitting of the ROI at every iteration depends on the topology of the network and the distribution of sensors in the network. For a given network topology, the optimal region split can be determined offline using k-means clustering \cite{Berkhin02surveyof} which yields Voronoi regions \cite{Aurenhammer:1991:VDS:116873.116880} containing equal number of sensors in every region. For instance, when the sensors are deployed in a regular grid, the optimal splitting is uniform as shown in Fig.~\ref{model_split}. In the remainder of the paper, we consider a symmetric sensor deployment such as a grid. Such a deployment results in a one-to-one correspondence between sensors across regions which is required in our derivations. Further discussion is provided in the later part of this section. In this section, the sensors are assumed to be benign and the channels between the local sensors and the FC are assumed to be ideal. Therefore, in this section, the binary decisions received at the FC are the same as the binary decisions made by the local sensors, i.e., $u_i=D_i$, for $i=1, \cdots, N$. We relax these assumptions in the later sections. The FC estimates the target location using the received data $\uu$. 

\begin{figure}[htb]
\centering
\includegraphics[width = 3.75in,height=!]{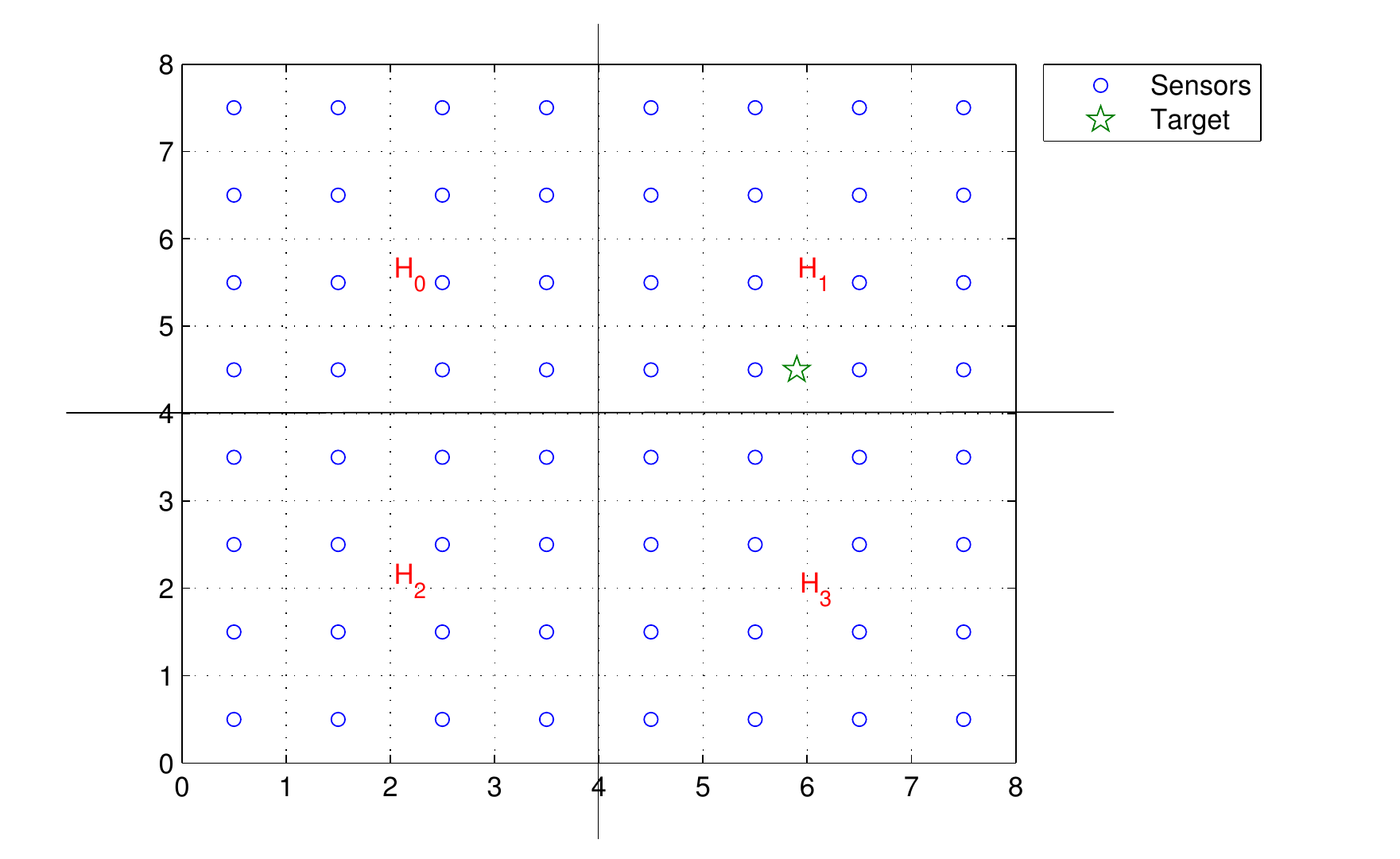}
\caption{Equal region splitting of the ROI for the $M$-hypothesis test}
\label{model_split}
\end{figure}

\subsection{Basic Coding Based Scheme}
\label{basic}
In this subsection, we present the basic coding based scheme for target localization. Since there are $N$ sensors which are split into $M$ regions, the number of sensors in the new ROI after every iteration is reduced by a factor of $M$. After $k$ iterations, the number of sensors in the ROI are $\frac{N}{M^k}$ and, therefore, the code matrix at the $(k+1)^{th}$ iteration would be of size $M \times \frac{N}{M^k}$.\footnote{We assume that $N$ is divisible by $M^k$ for $k=0,1,\ldots, \log_M N-1$.}  Since the code matrix should always have more columns than rows, $k^{stop} < \log_M{N}$, where $k^{stop}$ is the number of iterations after which the scheme terminates. After $k^{stop}$ iterations, there are only $\frac{N}{M^{k^{stop}}}$ sensors present in the ROI and a coarse estimate $\hat{\theta}=[\hat{\theta}_x, \hat{\theta}_y]$ of the target's location can be obtained by taking an average of locations of the $\frac{N}{M^{k^{stop}}}$ sensors present in the ROI:

\begin{eqnarray}
\hat{\theta}_x=\frac{M^{k^{stop}}}{N}\sum_{i\in ROI_{k^{stop}}}{x_i}\\ \text{and} \quad
\hat{\theta}_y=\frac{M^{k^{stop}}}{N}\sum_{i\in ROI_{k^{stop}}}{y_i},
\end{eqnarray}
where $ROI_{k^{stop}}$ is the ROI at the last step.

Since the scheme is iterative, the code matrix needs to be designed at every iteration. Observing the structure of our problem, we can design the code matrix in a simple and efficient way as described below. As pointed out before, the size of the code matrix $C^k$ at the $(k+1)^{th}$ iteration is $M \times \frac{N}{M^k}$, where $0\le k\le k^{stop}$. Each row of this code matrix $C^k$ represents a possible hypothesis described by a region in the ROI. Let $R_j^k$ denote the region represented by the hypothesis $H_j$ for $j=0,1,\cdots,M-1$ and let $S_j^k$ represent the set of sensors that lie in the region $R_j^k$. Also, for every sensor $i$, there is a unique corresponding region in which the sensor lies and the hypothesis of the region is represented as $r^k(i)$. It is easy to see that $S^k_j=\{i\in ROI_k|r^k(i)=j\}$. The code matrix is designed in such a way that for the $j^{th}$ row, only those sensors that are in $R_j^k$ have `1' as their elements in the code matrix. In other words, the elements of the code matrix are given by

\begin{equation}
\label{codematrix}
c^k_{(j+1)i}=
\begin{cases}
1 & \text{if $i \in \mathcal{S}^k_j$}\\
0 & \text{otherwise}
\end{cases},
\end{equation}

for $j=0,1,\cdots,M-1$ and $i \in ROI_k$.

The above construction can also be viewed as each sensor $i$ using a threshold $\eta^k_i$ for quantization (as described in \eqref{threshold}). Let each region $R_j^k$ correspond to a location $\theta^k_j$ for $j=0,1,\cdots,M-1$, which in our case is the center of the region $R_j^k$. Each sensor $i$ decides on a `1' if and only if the target lies in the region $R^k_{r^k(i)}$. Every sensor $i$, therefore, performs a binary hypothesis test described as follows:
\begin{eqnarray}
H_1:&	\text{$\theta^k \in R^k_{r^k(i)}$}\nonumber\\
H_0:&	\text{$\theta^k \notin R^k_{r^k(i)}$}.
\end{eqnarray}

If $d_{i,\theta^k_j}$ represents the Euclidean distance between the $i^{th}$ sensor and $\theta^k_j$ for $i=1,2,\cdots,N$ and $j=0,1,\cdots, M-1$, then $r^k(i)=\displaystyle\argmin_{l}{d_{i,\theta^k_l}}$. Therefore, the condition $\theta^k \in R^k_{r^k(i)}$ can be abstracted as a threshold $\eta^k_i$ on the local sensor signal amplitude given by
\begin{equation}
\label{eq:thresh}
\eta^k_i=\frac{\sqrt{P_0}}{d_{i,\theta^k_{r^k(i)}}}.
\end{equation}
This ensures that if the signal amplitude at the $i^{th}$ sensor is above the threshold $\eta^k_i$, then $\theta^k$ lies in region $R^k_{r^k(i)}$ leading to minimum distance decoding. 

\subsection{Performance Analysis}
\label{det}
In this subsection, we present the performance analysis of the proposed scheme. Although the performance metric in this framework is the Mean Square Error (MSE), it is difficult to obtain a closed form representation for MSE. Therefore, typically, one uses the bounds on MSE to characterize the performance of the estimator. In our previous works \cite{niu_tsp06,Vempaty_tsp}, we analytically derived the expressions of MSE bound (Posterior Cram\'{e}r Rao Lower Bound) on target localization under both non-adversarial \cite{niu_tsp06} and adversarial scenarios \cite{Vempaty_tsp}. An analytically tractable metric to analyze the performance of the proposed scheme is the probability of detection of the target region. It is an important metric when the final goal of the target localization task is to find the approximate region or neighborhood where the target lies rather than the true location itself. Since the final ROI could be one of the $M$ regions, a metric of interest is the probability of `zooming' into the correct region. In other words, it is the probability that the true location and the estimated location lie in the same region.

The final region of the estimated target location is the same as the true target location, if and only if we `zoom' into the correct region at every iteration of the proposed scheme. If $P_d^k$ denotes the detection probability (probability of correct classification) at the $(k+1)^{th}$ iteration, the overall detection probability is given by

\begin{align}
P_D=\prod_{k=0}^{k^{stop}}{P_d^k}\label{final_pd}.
\end{align}

\subsubsection*{Exact Analysis}
Let us consider the $(k+1)^{th}$ iteration and define the received vector at the FC as $\uu^k=[u^k_1,u^k_2,\cdots,u^k_{N_k}]$, where $N_k$ are the number of local sensors reporting their data to FC at $(k+1)^{th}$ iteration. Let $\yd^k_j$ be the decision region of $j^{th}$ hypothesis defined as follows:

$$\yd^k_j=\{ \uu^k| d_H(\uu^k,\cc^k_{j+1})\le d_H(\uu^k,\cc^k_{l+1}) \mbox{ for } 0\le l\le M-1\},$$
where $d_H(\cdot,\cdot)$ is the Hamming distance between two vectors, and $\cc^k_{j+1}$ is the codeword corresponding to hypothesis $j$ in code matrix $C^k$. Then define the reward  $r^{j,k}_{\uu^k}$ associated with the hypothesis $j$ as
\begin{eqnarray}
r^{j,k}_{\uu^k}=\left\{\begin{array}{ll}\frac{1}{q_{\uu^k}}& \mbox{ when } {\uu^k}\in \yd^k_j\\
0&\mbox{otherwise}
\end{array},\right.\label{reward}
\end{eqnarray}
where $q_{\uu^k}$ is the number of decision regions to whom $\uu^k$ belongs to. Note that $q_{\uu^k}$ can be greater than one when there is a tie at the FC. Under such scenarios when $q_{\uu^k} > 1$, we break the tie using random decision. Since the tie-breaking rule is to choose one of them randomly, which is successful with probability $\frac{1}{q_{\uu^k}}$, the reward is given by \eqref{reward}. According to \eqref{reward}, the detection probability at the $(k+1)^{th}$ iteration is given by
\begin{eqnarray}
\label{pdk}
P_d^k&=&\sum_{j=0}^{M-1}P(H_j^k)\sum_{\uu^k\in \{0,1\}^{N_k}}P(\uu^k|H_j^k)r_{\uu^k}^{j,k}\nonumber\\
&=&\frac{1}{M}\sum_{j=0}^{M-1}\sum_{\uu^k\in \yd_j^k}\left(\prod_{i=1}^{N_k}P(u^k_i|H_j^k)\right)\frac{1}{q_{\uu^k}},
\end{eqnarray}
where $P(u^k_i|H_j^k)$ denotes the probability that the sensor $i$ sends the bit $u^k_i \in \{0,1\}$, $i=1,2,\cdots,N_k$, when the true target is in the region $R_j^k$ corresponding to $H_j^k$ at the $(k+1)^{th}$ iteration. 

From the system model described before, we get

\begin{eqnarray}
\label{eq:localprob}
&P(u^k_i=1|H_j^k)=E_{\theta|H_j^k}\left[P(u^k_i=1|\theta,H_j^k)\right].
\end{eqnarray}
Since \eqref{eq:localprob} is complicated, it can be approximated using $\theta_j^k$ which is the center of the region $R_j^k$. \eqref{eq:localprob} now simplifies to 
\begin{equation}
P(u_i^k=1|H_j^k)\approx \bar{F}\left(\eta_i^k-a_{ij}^k;\sigma^2\right),
\end{equation}
where $\eta_i^k$ is the threshold used by the $i^{th}$ sensor at $k^{th}$ iteration, $\sigma^2$ is the noise variance, $a_{ij}^k$ is the signal amplitude received at the $i^{th}$ sensor when the target is at $\theta_j^k$ and $\bar{F}(x;\sigma^2)$ is the complementary cumulative distribution function of noise at the local sensors.

Using \eqref{final_pd}, the probability of detection of the target region can be found as the product of detection probabilities at every iteration $k$. It is clear from the derived expressions that the exact analysis of the detection probability is complicated and, therefore, we derive some analytical bounds on the performance of the proposed scheme.

\subsubsection*{Performance bounds}
In this section, we present the performance bounds on our proposed coding based localization scheme. For our analysis, we will use the lemmas in \cite{Yao_TIT'07}, which are stated here for the sake of completeness.

\begin{lemma}[\cite{Yao_TIT'07}] 
\label{lemma:bound1}
Let $\{Z_j\}_{j=1}^\infty$ be independent antipodal random variables with $Pr[Z_j=1]=q_j$ and $Pr[Z_j=-1]=1-q_j$. 
If $\lambda_m \defeq E[Z_1+\cdots+Z_m]/m < 0$, then

\begin{equation}
Pr\{Z_1+\cdots+Z_m \geq 0\}\leq(1-\lambda_m^2)^{m/2}.
\label{lambda_m}
\end{equation}
\end{lemma}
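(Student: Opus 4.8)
The plan is to use a Chernoff-type exponential Markov bound, factored over the independent summands, and then to collapse the heterogeneous product of moment generating functions into a single effective term via the arithmetic--geometric mean (AM--GM) inequality before optimizing the free parameter.

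First I would record that each antipodal variable has mean $E[Z_j]=q_j-(1-q_j)=2q_j-1$, so that with $S_m=Z_1+\cdots+Z_m$ the hypothesis reads $\lambda_m=\frac{1}{m}\sum_{j=1}^m(2q_j-1)<0$. For every $t\ge 0$, Markov's inequality applied to $e^{tS_m}$ together with independence gives
\[
Pr\{S_m\ge 0\}=Pr\{e^{tS_m}\ge 1\}\le E[e^{tS_m}]=\prod_{j=1}^m M_j(t),\qquad M_j(t)\defeq q_je^{t}+(1-q_j)e^{-t}>0.
\]

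Next I would apply AM--GM to the positive factors $M_j(t)$, namely $\prod_{j=1}^m M_j(t)\le\big(\frac1m\sum_{j=1}^m M_j(t)\big)^m$. The value of this step is that the arithmetic mean depends on the $q_j$ only through $\bar q=\frac1m\sum_j q_j=\frac{1+\lambda_m}{2}$, so that $\frac1m\sum_j M_j(t)=\frac{1+\lambda_m}{2}e^{t}+\frac{1-\lambda_m}{2}e^{-t}=:g(t)$, and hence $Pr\{S_m\ge 0\}\le g(t)^m$ for all $t\ge 0$. Finally I would minimize $g$ over $t\ge 0$: solving $g'(t)=0$ gives $e^{2t}=\frac{1-\lambda_m}{1+\lambda_m}$, a ratio that exceeds $1$ precisely because $\lambda_m<0$, so the stationary point $t^{*}=\frac12\ln\frac{1-\lambda_m}{1+\lambda_m}$ is strictly positive and therefore admissible for the bound. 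Substituting $e^{t^{*}}=\sqrt{(1-\lambda_m)/(1+\lambda_m)}$ yields $g(t^{*})=\sqrt{1-\lambda_m^2}$, whence $Pr\{S_m\ge 0\}\le g(t^{*})^m=(1-\lambda_m^2)^{m/2}$, as claimed.

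The only genuinely nonroutine step is the AM--GM reduction, which trades the heterogeneous product $\prod_j M_j(t)$ for a homogeneous one and is what makes the closed-form optimization possible; the remainder is a standard Chernoff computation. The one subtlety to verify is that the minimizer $t^{*}$ lies in the admissible range $t\ge 0$, which holds exactly under the hypothesis $\lambda_m<0$ and is precisely where that assumption is consumed.
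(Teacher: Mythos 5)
Your proposal is correct, but note that there is nothing in this paper to compare it against: the lemma is imported verbatim from \cite{Yao_TIT'07} ``for the sake of completeness,'' and the paper gives no proof of it (the proof lives in that reference). Your derivation therefore stands on its own, and it is sound. The chain Markov/Chernoff $\to$ AM--GM $\to$ optimization works exactly as you describe: for $t\ge 0$, independence gives $Pr\{S_m\ge 0\}\le \prod_{j=1}^m\bigl(q_je^{t}+(1-q_j)e^{-t}\bigr)$; AM--GM collapses the product to $g(t)^m$ with $g(t)=\frac{1+\lambda_m}{2}e^{t}+\frac{1-\lambda_m}{2}e^{-t}$, since the arithmetic mean of the factors depends on the $q_j$ only through $\bar q=\frac{1+\lambda_m}{2}$; and the stationary point $t^{*}=\frac12\ln\frac{1-\lambda_m}{1+\lambda_m}$ is strictly positive precisely because $\lambda_m<0$, yielding $g(t^{*})=\sqrt{1-\lambda_m^2}$ and hence the claimed bound. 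This Chernoff-plus-AM--GM route is the standard way such $(1-\lambda_m^2)^{m/2}$ bounds are obtained and is essentially the argument of the cited source. One boundary case deserves a sentence in a polished write-up: if $\lambda_m=-1$ (i.e., every $q_j=0$), then $1+\lambda_m=0$ and your formula for $t^{*}$ is undefined; but in that case $S_m=-m$ almost surely, so the left-hand side is $0$ and the right-hand side is $(1-1)^{m/2}=0$, and the inequality holds trivially. With that one-line patch your proof is complete.
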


Using this lemma, we now present the performance bounds on our proposed scheme. 

\begin{lemma} 
\label{lemma:bound}
Let $\theta\in R_j^k$ be the fixed target location. Let $P_e^k(\theta)$ be the misclassification probability of the target region given $\theta$  at the $(k+1)^{th}$ iteration. For the received vector of $N_k=N/M^k$ observations at the $(k+1)^{th}$ iteration, $\uu^k=[u_1^k,\cdots,u_{N_k}^k]$, assume that for every $0 \leq j,l \leq M-1$ and $l\neq j$, 
\begin{equation}
\sum_{i\in S_j^k\cup S_l^k}{q_{i,j}^k} < \frac{N_k}{M}=\frac{N}{M^{k+1}},\label{condition}
\end{equation}
where $q_{i,j}^k=P\{z_{i,j}^k=1|\theta\}$, $z_{i,j}^k=2(u_i^k \oplus c_{(j+1)i}^k)-1$, and $C^k=\{c_{(j+1)i}^k\}$ is the code matrix used at the $(k+1)^{th}$ iteration. Then
\begin{eqnarray}
P_e^k(\theta)  &\le& \sum_{0 \leq l \leq M-1, l\neq j}\left(1-\frac{\left(\sum_{i\in S_j^k\cup S_l^k}(2q_{i,j}^k-1)\right)^2}{d_{m,k}^2}\right)^{d_{m,k}/2}\label{eq:bound1}\\
&\le&(M-1)\left(1-\left(\lambda_{j,\text{max}}^k(\theta)\right)^2\right)^{d_{m,k}/2}\label{eq:bound2},
\end{eqnarray}
where $d_{m,k}$ is the minimum Hamming distance of the code matrix $C^k$ given by $d_{m,k}=\frac{2N}{M^{k+1}}$ due to the structure of our code matrix and

\begin{equation}
\lambda_{j,\text{max}}^k(\theta)\defeq\max_{0\leq l\leq M-1,l\neq j}\frac{1}{d_{m,k}}\sum_{i\in S_j^k\cup S_l^k}(2q_{i,j}^k-1).
\end{equation}

\end{lemma}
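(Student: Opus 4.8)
The plan is to reduce the $M$-ary minimum-Hamming-distance decision to a set of pairwise comparisons, recognize each comparison as a sum of independent antipodal random variables, and invoke Lemma~\ref{lemma:bound1}. First I would apply a union bound. Since $\theta\in R_j^k$, the true codeword is $\cc^k_{j+1}$, and a misclassification can occur only if $\uu^k$ falls into some region $\yd^k_l$ with $l\neq j$, i.e. $d_H(\uu^k,\cc^k_{l+1})\le d_H(\uu^k,\cc^k_{j+1})$ for some $l\neq j$. Hence
$$P_e^k(\theta)\le \sum_{0\le l\le M-1,\,l\neq j} P\{ d_H(\uu^k,\cc^k_{l+1})\le d_H(\uu^k,\cc^k_{j+1}) \mid \theta\}.$$
Counting ties as errors only inflates the right-hand side, so the bound stays valid despite the randomized tie-breaking used in \eqref{reward}, and I need not track $q_{\uu^k}$ here.

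Next I would localize each pairwise comparison to the coordinates where the two codewords differ. By the construction \eqref{codematrix} and the disjointness of the regions, $c^k_{(j+1)i}$ and $c^k_{(l+1)i}$ are unequal exactly when $i$ lies in precisely one of $S_j^k,S_l^k$, so the rows disagree on $S_j^k\cup S_l^k$; equal region splitting gives $|S_j^k\cup S_l^k|=2N/M^{k+1}=d_{m,k}$. On the agreeing coordinates the two Hamming distances receive identical contributions, so their difference depends only on $i\in S_j^k\cup S_l^k$. On such a coordinate, $u_i^k=c^k_{(j+1)i}$ contributes $+1$ to $d_H(\uu^k,\cc^k_{l+1})-d_H(\uu^k,\cc^k_{j+1})$ while $u_i^k\neq c^k_{(j+1)i}$ contributes $-1$, and in terms of $z_{i,j}^k=2(u_i^k\oplus c^k_{(j+1)i})-1$ this per-coordinate contribution is exactly $-z_{i,j}^k$. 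Thus the pairwise event $\{d_H(\uu^k,\cc^k_{l+1})\le d_H(\uu^k,\cc^k_{j+1})\}$ coincides with $\{\sum_{i\in S_j^k\cup S_l^k} z_{i,j}^k\ge 0\}$.

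Now I would apply Lemma~\ref{lemma:bound1} to the $m=d_{m,k}$ variables $\{z_{i,j}^k\}_{i\in S_j^k\cup S_l^k}$. Conditioned on the fixed $\theta$, the bits $u_i^k$ are independent across $i$ because the sensor noises $n_i$ are i.i.d., so the $z_{i,j}^k$ are independent antipodal variables with $P\{z_{i,j}^k=1\}=q_{i,j}^k$ and mean $2q_{i,j}^k-1$. Their normalized mean $\lambda_m=\frac{1}{d_{m,k}}\sum_{i\in S_j^k\cup S_l^k}(2q_{i,j}^k-1)$ is negative precisely under assumption \eqref{condition}, since $\sum q_{i,j}^k<N/M^{k+1}=d_{m,k}/2$. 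Lemma~\ref{lemma:bound1} then bounds each pairwise probability by $(1-\lambda_m^2)^{d_{m,k}/2}$, and summing over $l\neq j$ gives \eqref{eq:bound1}. For \eqref{eq:bound2}, note that \eqref{condition} forces every such normalized mean to be negative, so $\lambda_{j,\text{max}}^k(\theta)$ (their maximum) is the least negative and has the smallest squared value; hence $1-(\lambda_{j,\text{max}}^k(\theta))^2\ge 1-\lambda_m^2$ for each $l$, and bounding all $M-1$ terms by this common largest value yields \eqref{eq:bound2}.

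I expect the main obstacle to be the bookkeeping in the second step: establishing cleanly that the pairwise Hamming comparison reduces to the sign of $\sum_{i\in S_j^k\cup S_l^k} z_{i,j}^k$, in particular that the differing support is exactly $S_j^k\cup S_l^k$ with cardinality $d_{m,k}$ and that the per-coordinate distance difference equals $-z_{i,j}^k$. Once this identification is in place, the rest is a direct substitution into Lemma~\ref{lemma:bound1} together with the elementary monotonicity argument for \eqref{eq:bound2}. \qed
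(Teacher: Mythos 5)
Your proposal is correct and takes essentially the same route as the paper's proof: a union bound over pairwise Hamming-distance comparisons, identification of each comparison with the sign of $\sum_{i\in S_j^k\cup S_l^k} z_{i,j}^k$ (the differing support being exactly $S_j^k\cup S_l^k$ of size $d_{m,k}$), an application of Lemma~\ref{lemma:bound1} under condition~\eqref{condition}, and the observation that all pairwise normalized means are negative so that $\lambda_{j,\text{max}}^k(\theta)$ yields the weakest common bound, giving \eqref{eq:bound2}. Your explicit treatment of ties (counting them as errors only inflates the bound) and the per-coordinate bookkeeping simply spell out steps the paper leaves implicit.
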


\begin{proof}
Let $d_H(\cdot,\cdot)$ be the Hamming distance between two vectors, for fixed $\theta \in R_j^k$,
\begin{eqnarray}
&&P_e^k(\theta)\nn\\
&=&P\left\{\text{detected region} \neq R_j^k|\theta\right\}\nn\\
 &\leq& P\left\{d_H(\uu^k,\cc_{j+1}^k) \geq \min_{0\leq l \leq M-1,l\neq j} d_H(\uu^k,\cc_{l+1}^k)|\theta\right\}\nonumber\\
&\leq& \sum_{0\leq l\leq M-1, l\neq j}P\left\{d_H(\uu^k,\cc_{j+1}^k) \geq d_H(\uu^k,\cc_{l+1}^k)|\theta\right\}\nonumber\\
&=& \sum_{0\leq l\leq M-1, l\neq j} P\left\{\sum_{\{i\in[1,\cdots,N_k]:c_{(l+1)i}\neq c_{(j+1)i}\}}z_{i,j}^k \geq 0|\theta\right\}. \nn\\\label{p_d-bound}
\end{eqnarray}

Using the fact that $c_{(l+1)i}^k \neq c_{(j+1)i}^k$ for all $i\in S_j^k\cup S_l^k$, $l \neq j$, we can simplify the above equation. Also, observe that $\{z_{i,j}\}_{i=1}^{N_k}$ are independent across the sensors given $\theta$. According to (2) in~\cite{Yao_TIT'07}, 
\begin{eqnarray}
\lambda_m&=&\frac{1}{d_{m,k}}\sum_{i=1}^{N_k}(c_{(l+1)i}^k\oplus c_{(j+1)i}^k)(2q_{i,j}^k-1)\nn\\
&=&\frac{1}{d_{m,k}}\sum_{i\in S_j^k\cup S_l^k}(2q_{i,j}^k-1)\nn\\
&=&\frac{1}{d_{m,k}}\left(\sum_{i\in S_j^k\cup S_l^k}2q_{i,j}^k-\frac{2N_k}{M}\right)\label{lambda_m-2}
\end{eqnarray}
since $c_{(l+1)i}^k \neq c_{(j+1)i}^k$ for all $i\in S_j^k\cup S_l^k$, $l \neq j$. Here, we have used the fact that cardinality of $S_j^k = N_k/M$ for all $j$, and $S_j^k$ and $S_l^k$ are disjoint sets for all $l \neq j$.
Condition $\lambda_m<0$ of Lemma \ref{lemma:bound1} is then equivalent to condition~\eqref{condition}. Therefore, using Lemma \ref{lemma:bound1} and \eqref{lambda_m-2}, we have

\begin{eqnarray}
&&P\left\{\sum_{\{i\in[1,\cdots,N_k]:c_{(l+1)i}\neq c_{(j+1)i}\}} z_{i,j}^k \geq 0 |\theta \right\} \nn\\
& \leq & \left(1-\frac{\left(\sum_{i\in S_j^k\cup S_l^k}(2q_{i,j}^k-1)\right)^2}{d_{m,k}^2}\right)^{d_{m,k}/2}.\label{inf_bound}
\end{eqnarray}
Substituting \eqref{inf_bound} into \eqref{p_d-bound}, we have
\eqref{eq:bound1}. Note that condition~\eqref{condition} ($\lambda_m<0$) implies $\lambda_{j,\text{max}}^k(\theta)<0$ by definition. Hence, \eqref{eq:bound2} is   a direct consequence from \eqref{eq:bound1}.
\end{proof}

The probabilities $q_{i,j}^k=P\{u_i^k\neq c_{(j+1)i}^k|\theta\}$ can be easily computed as below. For $0 \leq j \leq M-1$ and $1 \leq i \leq N_k$, if $i \in S_j^k$,

\begin{eqnarray}
q_{i,j}^k&=&P\{u_i^k=0|\theta\}\nn\\
&=&1-\bar{F}\left(\eta_i^k-a_i;\sigma^2\right),\label{eq:prob1}
\end{eqnarray}
where $\eta_i^k$ is the threshold used by the $i^{th}$ sensor at $(k+1)^{th}$ iteration, $\sigma^2$ is the noise variance, $a_i$ is the amplitude received at the $i^{th}$ sensor given by \eqref{sys_mod} when the target is at $\theta$.
If $i \notin S_j^k$, $q_{i,j}^k=1-P\{u_i^k=0|\theta\}$.

Before we present our main theorem, for ease of analysis, we give an assumption that will be used in the theorem. Note that, our proposed scheme can still be applied to those WSNs where the assumption does not hold. 

\begin{assumption}
\label{assumption}
For any target location $\theta\in R_j^k$ and any $0\le k\le k^{stop}$, there exists a bijection function $f$ from $S_j^k$ to $S_l^k$, where $0\le l\le M-1$ and $l\neq j$, such that 
$$f(i_j)=i_l,$$
$$\eta_{i_j}^k=\eta_{i_l}^k,$$
and
$$d_{i_j}<d_{i_l},$$
where $i_j\in S_j^k$, $i_l\in S_l^k$, and $d_{i_j}$ ($d_{i_l}$) is the distance between $\theta$ and sensor $i_j$ ($i_l$).
\end{assumption}

One example of WSNs that satisfies this assumption is given in Fig.~\ref{symmetric_sensor}. For every sensor $i_j \in S_j^k$, due to symmetric region splitting, there exists a corresponding sensor $i_l \in S_l^k$ which is symmetrically located as described in the following: Join the centers of the two regions and draw a perpendicular bisector to this line as shown in Fig. \ref{symmetric_sensor}. The sensor $i_l \in S_l^k$ is the sensor located symmetrically to sensor $i_j$ on the other side of the line $L$. These are the sensors for which the thresholds are the same. In other words, due to the symmetric placement of the sensors, $\eta_{i_j}^k=\eta_{i_l}^k$ (c.f. \eqref{eq:thresh}). Clearly, when $\theta\in R_j^k$, $d_{i_j}<d_{i_l}$.

\begin{figure}[thb]
\centering
\includegraphics[width = 4in,height=!]{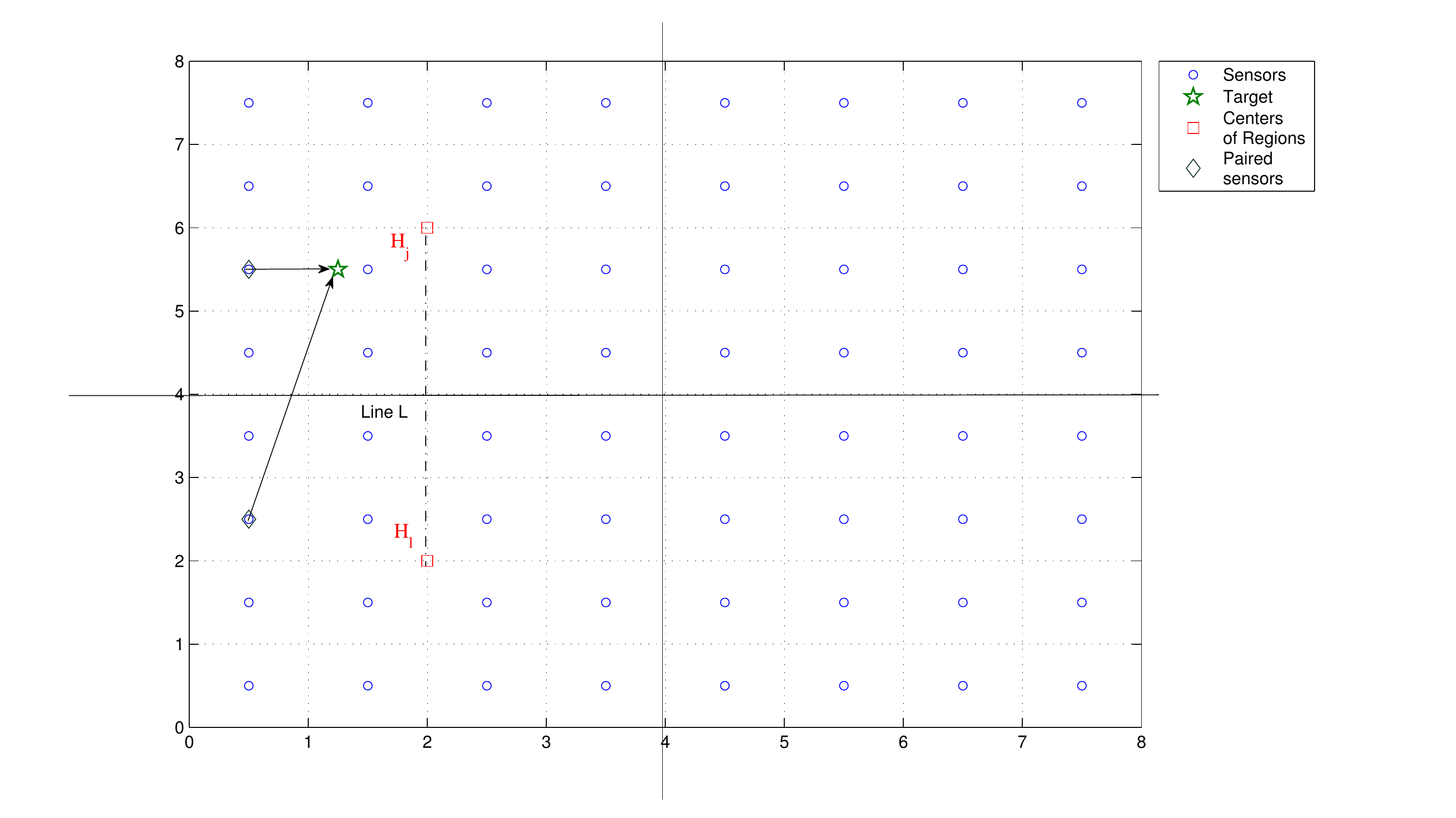}
\caption{ROI with an example set of paired sensors}
\label{symmetric_sensor}
\end{figure}

\begin{theorem}
\label{main-theorem}
Let $P_D$ be the probability of detection of the target region given by \eqref{final_pd}, where $P_d^k$ is the detection probability at the $(k+1)^{th}$ iteration. Under Assumption~\ref{assumption},
\begin{equation}
P_d^k \geq 1-(M-1)\left(1-(\lambda^k_{\text{max}})^2\right)^{d_{m,k}/2},
\label{main-equation}
\end{equation}
where 
$$\lambda^k_{\text{max}}\defeq\max_{0\le j\le M-1}\lambda^k_{j,\text{max}}$$
and 
$$\lambda^k_{j,\text{max}}\defeq\max_{\theta\in R_j^k}\lambda^k_{j,\text{max}}(\theta).$$
\end{theorem}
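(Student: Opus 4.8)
The plan is to pass from the conditional (fixed-$\theta$) misclassification bound already supplied by Lemma~\ref{lemma:bound} to the averaged detection probability $P_d^k$, by first eliminating the dependence on $\theta$ and then on the hypothesis index $j$ through two nested maximizations, and finally averaging over the hypotheses. Under the uniform prior $P(H_j^k)=1/M$ used in \eqref{pdk}, the true error satisfies $P_d^k = 1 - \frac{1}{M}\sum_{j=0}^{M-1} E_{\theta|H_j^k}\!\left[P_e^k(\theta)\right]$, so the whole argument reduces to producing a single upper bound on $P_e^k(\theta)$ that is uniform over $\theta\in R_j^k$ and over $j$, and then summing.

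First I would discharge the hypothesis of Lemma~\ref{lemma:bound}, namely condition~\eqref{condition}, using Assumption~\ref{assumption}; this is the only place the assumption enters and I expect it to be the crux. Fix $\theta\in R_j^k$ and an index $l\neq j$, and use the bijection $f:S_j^k\to S_l^k$ to partition $S_j^k\cup S_l^k$ into $N_k/M$ pairs $(i_j,i_l)$ with $i_l=f(i_j)$. For such a pair, \eqref{eq:prob1} gives $q_{i_j,j}^k = 1-\bar{F}(\eta_{i_j}^k - a_{i_j};\sigma^2)$ while $q_{i_l,j}^k = \bar{F}(\eta_{i_l}^k - a_{i_l};\sigma^2)$ since $i_l\notin S_j^k$. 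Because $\eta_{i_j}^k=\eta_{i_l}^k$ and $d_{i_j}<d_{i_l}$ force $a_{i_j}>a_{i_l}$ through \eqref{sys_mod}, monotonicity of the complementary CDF $\bar{F}$ gives $\bar{F}(\eta_{i_j}^k-a_{i_j};\sigma^2)>\bar{F}(\eta_{i_l}^k-a_{i_l};\sigma^2)$, whence $q_{i_j,j}^k+q_{i_l,j}^k<1$. Summing over the $N_k/M$ pairs yields $\sum_{i\in S_j^k\cup S_l^k}q_{i,j}^k < N_k/M$, which is exactly \eqref{condition}; as this holds for every $l\neq j$, it also gives $\lambda_{j,\text{max}}^k(\theta)<0$.

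With the condition verified, Lemma~\ref{lemma:bound} applies and gives $P_e^k(\theta)\le (M-1)\bigl(1-(\lambda_{j,\text{max}}^k(\theta))^2\bigr)^{d_{m,k}/2}$ for every $\theta\in R_j^k$. The rest is monotonicity bookkeeping. Since the map $t\mapsto(1-t^2)^{d_{m,k}/2}$ is increasing for $t\in[-1,0]$ and each $\lambda_{j,\text{max}}^k(\theta)\in[-1,0)$, replacing $\lambda_{j,\text{max}}^k(\theta)$ by the least-negative value $\lambda_{j,\text{max}}^k=\max_{\theta\in R_j^k}\lambda_{j,\text{max}}^k(\theta)$ can only enlarge the bound, so $P_e^k(\theta)\le (M-1)\bigl(1-(\lambda_{j,\text{max}}^k)^2\bigr)^{d_{m,k}/2}$ uniformly in $\theta$, and the conditional expectation inherits this. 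Running the same comparison across $j$ with $\lambda_{\text{max}}^k=\max_j \lambda_{j,\text{max}}^k$ gives $E_{\theta|H_j^k}[P_e^k(\theta)]\le (M-1)\bigl(1-(\lambda_{\text{max}}^k)^2\bigr)^{d_{m,k}/2}$ for every $j$. Averaging over $j$ with weights $1/M$ collapses the now $j$-independent sum to $\frac{1}{M}\sum_j E_{\theta|H_j^k}[P_e^k(\theta)]\le (M-1)\bigl(1-(\lambda_{\text{max}}^k)^2\bigr)^{d_{m,k}/2}$, and substituting into $P_d^k = 1 - \frac{1}{M}\sum_j E_{\theta|H_j^k}[P_e^k(\theta)]$ yields \eqref{main-equation}.

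The genuinely substantive step is the pairing argument establishing \eqref{condition}; everything after it is a routine ``replace a quantity by its worst case'' manipulation that hinges on two facts I would state explicitly, namely that every $\lambda_{j,\text{max}}^k(\theta)$ is negative and that $(1-t^2)^{d_{m,k}/2}$ is monotone there. Care is also warranted at the very first move: $P_e^k(\theta)$ from Lemma~\ref{lemma:bound} is a union bound on the conditional error, so it upper-bounds the true conditional misclassification probability even under the random tie-breaking of \eqref{reward}, which is precisely what lets the averaged error bound become a \emph{lower} bound on $P_d^k$.
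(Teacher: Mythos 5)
Your proposal is correct and follows essentially the same route as the paper's own proof: verify condition~\eqref{condition} via the Assumption~\ref{assumption} pairing argument (equal thresholds, $a_{i_j}>a_{i_l}$, monotone $\bar{F}$, pairwise sums $<1$), invoke Lemma~\ref{lemma:bound} to get \eqref{eq:bound2}, and then replace $\lambda_{j,\text{max}}^k(\theta)$ by $\lambda_{j,\text{max}}^k$ and then by $\lambda_{\text{max}}^k$ using their negativity before averaging over $j$ with the uniform prior. Your explicit remarks on the monotonicity of $t\mapsto(1-t^2)^{d_{m,k}/2}$ on $[-1,0]$ and on tie-breaking are sound and only make explicit what the paper leaves implicit.
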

\begin{proof}
First we prove that condition~\eqref{condition} is satisfied by the proposed scheme for all $\theta$ when the noise variance, $\sigma^2<\infty$. Hence, the inequality~\eqref{eq:bound2} can be applied to the proposed scheme.
 The probabilities $q_{i,j}^k$ given by \eqref{eq:prob1} are
\begin{equation}
q_{i,j}^k=
\begin{cases}
1-\bar{F}\left(\eta_i^k-a_i;\sigma^2\right), &\text{for $i \in S_j^k$}\\
\bar{F}\left(\eta_i^k-a_i;\sigma^2\right), & \text{for $i \in S_l^k$}
\end{cases}.\label{eq:q-bound}
\end{equation}
By Assumption~\ref{assumption}, there exists a bijection function $f$ from $S_j^k$ to $S_l^k$. The sum $\sum_{i\in S_j^k\cup S_l^k}{q_{i,j}^k}$ of \eqref{condition} can be evaluated by considering pairwise summations as follows. Let us consider one such pair $(i_j\in S_j^k,f(i_j)=i_l\in S_l^k)$. Hence, their thresholds are $\eta_{i_j}^k=\eta_{i_l}^k=\eta$. Then, from \eqref{eq:q-bound},
\begin{eqnarray}
q_{i_j,j}^k+q_{i_l,j}^k&=&1-\bar{F}\left(\eta-a_{i_j};\sigma^2\right)+\bar{F}\left(\eta-a_{i_l};\sigma^2\right)\\
&=&1-\left[\bar{F}\left(\eta-a_{i_j};\sigma^2\right)-\bar{F}\left(\eta-a_{i_l};\sigma^2\right)\right].\nn\\\label{eq:pairwise_sum}
\end{eqnarray}
Now observe that, by the assumption, 
$$a_{i_j}=\frac{\sqrt{P_0}}{d_{i_j}}>\frac{\sqrt{P_0}}{d_{i_l}}=a_{i_l}$$ 
and, therefore, $\bar{F}\left(\eta-a_{i_j};\sigma^2\right)>\bar{F}\left(\eta-a_{i_l};\sigma^2\right)$ for all finite values of noise variance $\sigma^2$. From \eqref{eq:pairwise_sum}, the sum $q_{i_j,j}^k+q_{i_l,j}^k$ is strictly less than 1. Therefore, the sum $\sum_{i\in S_j^k\cup S_l^k}{q_{i,j}^k}<\frac{N_k}{M}=\frac{N}{M^{k+1}}=\frac{d_{m,k}}{2}$. Therefore, the condition in \eqref{condition} is satisfied for the code matrix used in this scheme. Hence, $P_e^k(\theta)$ can always be bounded by \eqref{eq:bound2}.

By using \eqref{eq:bound2}, $P_d^k$ can be bounded as follows:
\begin{eqnarray}
&&P_d^k\nn\\
&=&1-\sum_{j=0}^{M-1}P\{\theta\in R_j^k\}P\left\{\text{detected region} \neq R_j^k|\theta\in R_j^k\right\}\nn\\
&=&1-\frac{1}{M}\sum_{j=0}^{M-1}\int_{\theta}\nn\\
&&P\{\theta|\theta\in R_j^k\}P\left\{\text{detected region} \neq R_j^k|\theta,\theta\in R_j^k\right\}\ d\theta\nn\\
&=&1-\frac{1}{M}\sum_{j=0}^{M-1}\int_{\theta\in R_j^k}P\{\theta|\theta\in R_j^k\}P_e^k(\theta)\ d\theta\nn\\
&\ge& 1-\frac{1}{M}\sum_{j=0}^{M-1}\int_{\theta\in R_j^k}\nn\\
&&P\{\theta|\theta\in R_j^k\}(M-1)\left(1-\left(\lambda_{j,\text{max}}^k(\theta)\right)^2\right)^{d_{m,k}/2}\ d\theta\nn\\
&\ge& 1-\frac{M-1}{M}\sum_{j=0}^{M-1}\left(1-\left(\lambda_{j,\text{max}}^k\right)^2\right)^{d_{m,k}/2}\nn\\
&&\int_{\theta\in R_j^k}P\{\theta|\theta\in R_j^k\}\ d\theta\label{theorem:main-bound-2}\\
&\ge&1-\frac{M-1}{M}\sum_{j=0}^{M-1}\left(1-\left(\lambda_{\text{max}}^k\right)^2\right)^{d_{m,k}/2}\label{theorem:main-bound-1}\\
&=&1-(M-1)\left(1-\left(\lambda_{\text{max}}^k\right)^2\right)^{d_{m,k}/2}.\label{theorem:main-bound}
\end{eqnarray}
Both \eqref{theorem:main-bound-2} and \eqref{theorem:main-bound-1} are true since $\lambda_{j,\text{max}}^k<0$ and $\lambda_{\text{max}}^k<0$.
\end{proof}

Next we analyze the asymptotic performance of the scheme, i.e., we examine $P_D$ when $N$ approaches infinity.
\begin{theorem}
Under Assumption~\eqref{assumption}, $\displaystyle\lim_{N\rightarrow\infty} P_D=1$.
\end{theorem}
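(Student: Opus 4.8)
The plan is to show that $\lim_{N\to\infty} P_D = 1$ by leveraging the bound from Theorem~\ref{main-theorem} together with the product structure of $P_D$ in \eqref{final_pd}. First I would recall that
\begin{equation}
P_D = \prod_{k=0}^{k^{stop}} P_d^k,\nonumber
\end{equation}
a \emph{finite} product, since $k^{stop} < \log_M N$ is bounded at each fixed $N$ but, more importantly, the number of factors grows only logarithmically while each factor will be driven to $1$ exponentially fast. The key observation is that the minimum Hamming distance $d_{m,k} = \frac{2N}{M^{k+1}}$ grows linearly in $N$ for every fixed iteration index $k$. Substituting the bound \eqref{main-equation} gives
\begin{equation}
P_d^k \geq 1 - (M-1)\left(1-(\lambda^k_{\text{max}})^2\right)^{N/M^{k+1}},\nonumber
\end{equation}
so each $P_d^k \to 1$ as $N\to\infty$ provided $0 < (\lambda^k_{\text{max}})^2 \le 1$ stays bounded away from $0$ uniformly, so that the base $1-(\lambda^k_{\text{max}})^2$ is strictly less than $1$.

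Second I would control the whole product. The cleanest route is a union-bound / complementary-probability argument: writing $P_D = \prod_k P_d^k \geq 1 - \sum_{k=0}^{k^{stop}}(1-P_d^k)$, it suffices to show that the finite sum of miss probabilities $\sum_k (1-P_d^k)$ vanishes as $N\to\infty$. Using the theorem's bound, this sum is dominated by
\begin{equation}
\sum_{k=0}^{k^{stop}} (M-1)\left(1-(\lambda^k_{\text{max}})^2\right)^{N/M^{k+1}}.\nonumber
\end{equation}
The dangerous term is the last iteration $k = k^{stop}$, where the exponent $N/M^{k^{stop}+1}$ is smallest; but since $k^{stop} < \log_M N$ is a design choice, one can ensure $N/M^{k^{stop}+1}$ still tends to infinity (e.g.\ by keeping a fixed number of sensors per final region, which keeps this exponent bounded below by a positive constant multiple of that fixed count). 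I would make this precise by bounding every exponent below by $N/M^{k^{stop}+1}$ and every base above by $\beta \defeq \max_k (1-(\lambda^k_{\text{max}})^2) < 1$, yielding
\begin{equation}
\sum_{k=0}^{k^{stop}} (M-1)\,\beta^{\,N/M^{k^{stop}+1}} \le (M-1)(k^{stop}+1)\,\beta^{\,N/M^{k^{stop}+1}} \to 0,\nonumber
\end{equation}
because a logarithmic prefactor $(k^{stop}+1) = O(\log N)$ cannot overcome an exponentially decaying factor.

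The main obstacle I anticipate is establishing the uniform spectral gap: I must argue that $\lambda^k_{\text{max}}$ is bounded \emph{away} from $0$ (so the base is strictly below $1$) while remaining bounded away from $-1$ in a way that does not degenerate as $k$ or $N$ varies. The earlier analysis guarantees only $\lambda^k_{\text{max}} < 0$ for finite $\sigma^2$; to get a clean limit I would invoke the pairwise-summation identity from the proof of Theorem~\ref{main-theorem}, namely that for each symmetric pair $q_{i_j,j}^k + q_{i_l,j}^k = 1 - [\bar F(\eta-a_{i_j};\sigma^2) - \bar F(\eta-a_{i_l};\sigma^2)]$ is strictly less than $1$, so that $\lambda^k_{j,\text{max}}(\theta)$ is a negative quantity determined by the geometry of the region split and is independent of $N$ for a fixed iteration and fixed relative sensor placement. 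Under the symmetric (grid) deployment assumed throughout the paper, these per-pair gaps are identical across replicated sensors, so $\lambda^k_{\text{max}}$ stabilizes to a fixed negative constant as $N$ grows and $\beta < 1$ follows. Once that uniform gap is in hand, the exponential-versus-logarithmic comparison above closes the argument and gives $\lim_{N\to\infty} P_D = 1$, as claimed.
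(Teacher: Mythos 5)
Your core argument is the same as the paper's: invoke the per-iteration bound \eqref{main-equation}, observe that for fixed $M$ and a fixed, finite number of iterations the exponent $d_{m,k}/2 = N/M^{k+1}$ grows linearly in $N$ while the base $1-(\lambda^k_{\text{max}})^2$ lies strictly in $(0,1)$ (the paper gets this from $-1<\lambda^k_{\text{max}}<0$, using that not all $q^k_{i,j}$ are zero), so each $P_d^k\to 1$ and the finite product \eqref{final_pd} tends to $1$. Your union bound $\prod_k P_d^k \ge 1-\sum_k(1-P_d^k)$ is a cosmetic variant of taking the limit factor by factor. Your insistence on a uniform negative bound for $\lambda^k_{\text{max}}$ is also a fair point: the paper simply asserts strict negativity and implicitly treats the base as not drifting toward $1$ as $N$ grows; your stabilization argument under the symmetric grid deployment is the right way to shore this up, and it is a gap the paper itself leaves open.

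The one genuine error is the parenthetical device you offer for the last iteration: ``keeping a fixed number of sensors per final region'' does \emph{not} make $N/M^{k^{stop}+1}$ tend to infinity --- it makes it a constant $c/M$, contradicting the requirement you state one line earlier. In that regime the last term of your union bound is $(M-1)\beta^{c/M}$, a fixed positive constant, and with $k^{stop}+1 = \Theta(\log_M N)$ terms the sum does not vanish; even the bound on the last factor alone, $P_d^{k^{stop}} \ge 1-(M-1)\beta^{c/M}$, stays bounded away from $1$, so the proof collapses. Indeed, perfect asymptotic region detection should not be expected when the final regions hold only boundedly many sensors. This is precisely why the paper assumes the number of iterations to be finite (fixed as $N\to\infty$) and remarks that extensions are valid only as long as $\frac{N}{M^{k+1}}\to\infty$ as $N\to\infty$ for $0\le k\le k^{stop}$. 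Delete the parenthetical example and either keep the hypothesis $N/M^{k^{stop}+1}\to\infty$ explicitly or fix $k^{stop}$ as the paper does; with that repair your argument is sound.
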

\begin{proof}
We have \begin{eqnarray}
\lambda^k_{j,\text{max}}&=&\max_{0\leq l\leq M-1,l\neq j}\frac{1}{d_{m,k}}\sum_{i\in S_j^k\cup S_l^k}(2q_{i,j}^k-1)\nn\\
&>&\frac{M^{k+1}}{2N}\sum_{i\in S_j^k\cup S_l^k}(-1)=-1
\end{eqnarray}
for all $0\le j\le M-1$ since not all $q^k_{i,j}=0$. Hence, by definition,   $\lambda^k_{\text{max}}$ is also greater than $-1$. Since $-1<\lambda^k_{\text{max}}<0$, we have $0<1-(\lambda^k_{\text{max}})^2<1$. 
Under the assumption that the number of iterations are finite, for a fixed number of regions $M$, we can analyze the performance of the proposed scheme under asymptotic regime.  Under this assumption, $d_{m,k}=\frac{2N}{M^{k+1}}$ grows linearly with the number of sensors $N$ for $0\le k\le k^{stop}$. Then
\begin{eqnarray}
\lim_{N\rightarrow\infty} P_D&=&\lim_{N\rightarrow\infty} \prod_{k=0}^{k^{stop}}P^k_d\nn\\
&\ge&\prod_{k=0}^{k^{stop}}\lim_{N\rightarrow\infty}\left[1-(M-1)(1-(\lambda^k_{\text{max}})^2)^{d_{m,k}/2}\right]\nn\\
&=&\prod_{k=0}^{k^{stop}}\left(1-(M-1)\lim_{N\rightarrow\infty}\left[(1-(\lambda^k_{\text{max}})^2)^{d_{m,k}/2}\right]\right)\nn\\
&=&\prod_{k=0}^{k^{stop}}[1-(M-1)0]\nn\\
&=&\prod_{k=0}^{k^{stop}}1=1.\nn
\end{eqnarray}
Hence,  the overall detection probability becomes `1' as the number of sensors $N$ goes to infinity. This shows that the proposed scheme asymptotically attains perfect region detection probability irrespective of the value of finite noise variance.
\end{proof}
Note that the above result also holds when $M$ increases with $N$ as long as $d_{m,k}=\frac{2N}{M^{k+1}}$ grows with the number of sensors $N$ for $0\leq k\leq k^{stop}$. In other words, our theory can be extended to scenarios when $M$ increases with $N$ as long as $\frac{N}{M^{k+1}} \to \infty$ as $N \to \infty$ for $0\leq k\leq k^{stop}$.

\subsection{Numerical Results}
We now present some numerical results which justify the analytical results presented in the previous subsection and provide some insights. In the previous subsection, we have observed that the performance of the basic coding scheme quantified by the probability of region detection asymptotically approaches `1' irrespective of the finite noise variance. Fig.~\ref{Pd_sigma} shows that the region detection probability approaches `1' uniformly as the number of sensors approaches infinity for Gaussian sensor observation noise with variance $\sigma^2$. Observe that for a fixed noise variance, the region detection probability increases with increase in the number of sensors. This can also be observed from Table \ref{table:pd}. Also, for a fixed number of sensors, the region detection probability decreases with $\sigma$ when the number of sensors is small. But when the number of sensors is large, the reduction in region detection probability with $\sigma$ is negligible and as $N \to \infty$, the region detection probability converges to 1.

\begin{figure}[htb]
\centering
\includegraphics[width = 3.5in,height=!]{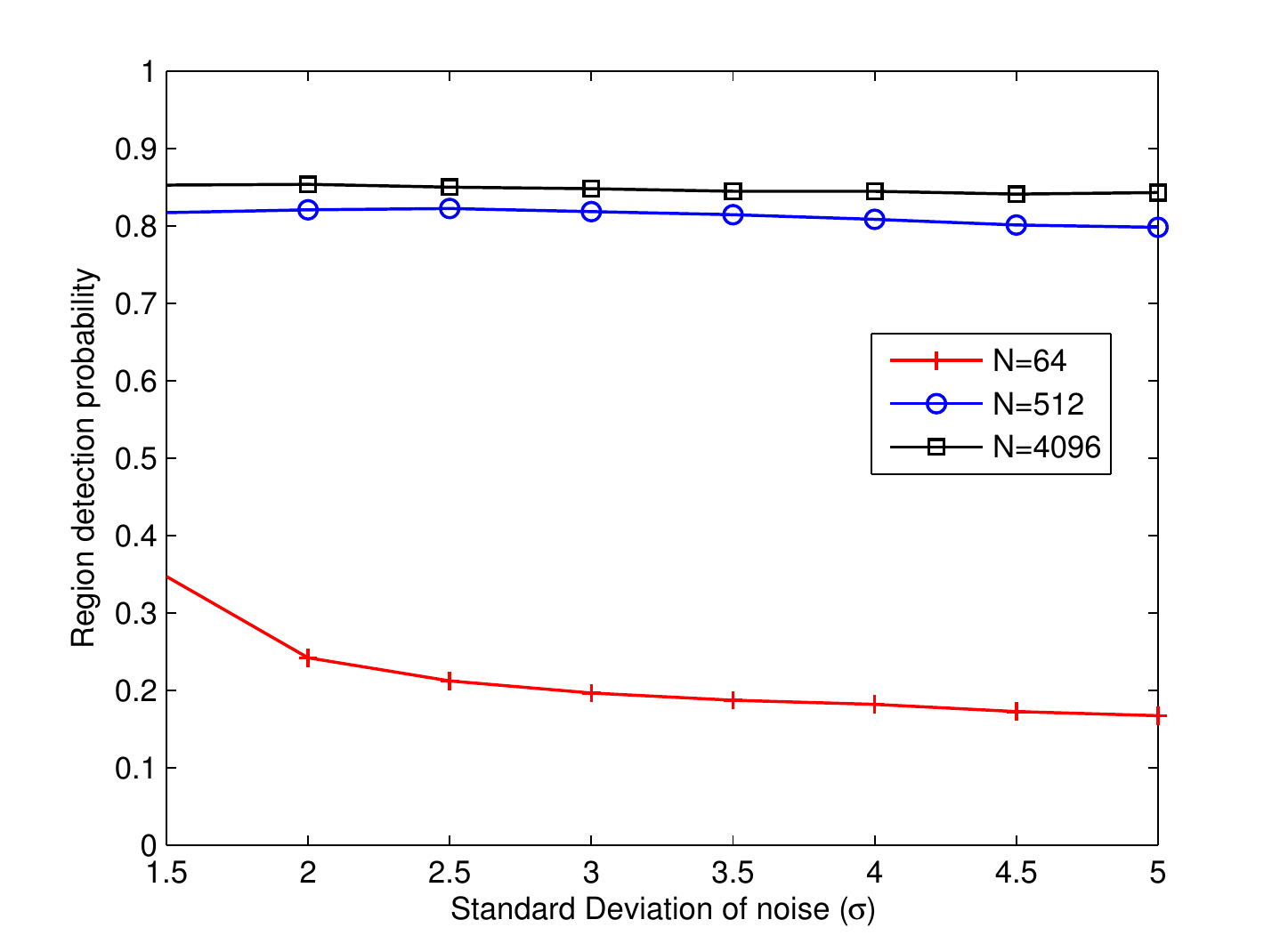}
\caption{Region detection probability versus the standard deviation of noise with varying number of sensors}
\label{Pd_sigma}
\end{figure}

\begin{table}[htb]
\caption{Target region detection probability for fixed noise variance ($\sigma =4$) with varying $N$ ($M=4$)}
\begin{center}
\begin{tabular}{|c|c|}
\hline
	$N$ & Target Region Detection probability \\
   \hline 
   \hline
   64 & 0.16753 \\ 
   \hline
   512 & 0.7982\\ 
   \hline
   4096 & 0.8433\\ 
   \hline
\end{tabular}
\end{center}
\label{table:pd}
\end{table}

\section{Localization in the presence of Byzantines}
\label{sec:byz}
Let us now consider the case when there are Byzantines in the network. As discussed before, Byzantines are local sensors which send false information to the FC to deteriorate the network's performance. We assume the presence of $B=\alpha N$ number of Byzantines in the network. In this paper, we assume that the Byzantines attack the network independently \cite{Vempaty_tsp} where the Byzantines flip their data with probability `1' before sending it to the FC. Note that the Byzantines can flip with any probability $\epsilon$. However, since it has been shown in \cite{Vempaty_tsp} that the optimal independent attack strategy for the Byzantines is to flip their data with probability `1', we focus on the optimal attack case which is $\epsilon=1$. In other words, the data sent by the $i^{th}$ sensor is given by:
\begin{equation}
\label{byz}
u_i=
\begin{cases}
D_i & \text{if $i^{th}$ sensor is honest} \\
\bar{D}_i & \text{if $i^{th}$ sensor is Byzantine}
\end{cases}.
\end{equation}

For such a system, it has been shown in \cite{Vempaty_tsp} that the FC becomes `blind' to the network's information for $\alpha \geq 0.5$. Therefore, for the remainder of the paper, we analyze the system when $\alpha < 0.5$. For the basic coding scheme described in Section \ref{basic}, each column in $C^k$ contains only one `1' and every row of $C^{k}$ contains exactly $\frac{N}{M^{k+1}}$ `1's. Therefore, the minimum Hamming distance of $C^k$ is $\frac{2N}{M^{k+1}}$ and, at the $(k+1)^{th}$ iteration, it can tolerate a total of at most $\frac{N}{M^{k+1}}-1$ faults (data falsification attacks) due to the presence of Byzantines in the network. This value is not very high and we would like to extend the basic scheme to a scheme which can handle more Byzantine faults.

\subsection{Exclusion Method with Weighted Average}
\label{exclusion}
As shown above, the scheme proposed in Section \ref{basic} has a Byzantine fault tolerance capability which is not very high. The performance can be improved by using an exclusion method for decoding where the two best regions are kept for next iteration and a weighted average is used to estimate the target location at the final step. This scheme builds on the basic coding scheme proposed in Section \ref{basic} with the following improvements:

\bi 
\item Since after every iteration two regions are kept, the code matrix after the $k^{th}$ iteration is of size $M \times \frac{2^kN}{M^k}$ and the number of iterations needed to stop the localization task needs to satisfy $k^{stop} < \log_{M/2}N$. 
\item At the final step, instead of taking an average of the sensor locations of the sensors present in the ROI at the final step, we take a weighted average of the sensor locations where the weights are the 1-bit decisions sent by these sensors. Since a decision $u_i=1$ would imply that the target is closer to the sensor $i$, a weighted average ensures that the average is taken only over the sensors for which the target is reported to be close. 
\ei

Therefore, the target location estimate is given by 
\begin{eqnarray}
\hat{\theta}_x=\frac{\sum_{i\in ROI_{k^{stop}}}{u_i x_i}}{\sum_{i\in ROI_{k^{stop}}}{u_i}}\\\ \text{and}\quad
\hat{\theta}_y=\frac{\sum_{i\in ROI_{k^{stop}}}{u_i y_i}}{\sum_{i\in ROI_{k^{stop}}}{u_i}}.
\end{eqnarray}

One can extend this scheme to consider other weights such as based on Euclidean distance which can be determined after processing the initial data to derive a coarse estimate of the target location. However, further processing is required for this and, therefore, we have not used such a scheme.
The exclusion method results in a better performance compared to the basic coding scheme since it keeps the two best regions after every iteration. This observation is also evident in the numerical results presented in Section~\ref{exclusive_res}. 

\subsection{Performance analysis}
\label{perf}

\subsubsection*{Byzantine Fault Tolerance Capability}
\label{fault}
When the exclusion based scheme described in Section \ref{exclusion} is used, since the two best regions are considered after every iteration, the fault tolerance performance improves and we can tolerate a total of at most $\frac{2^{k+1}N}{M^{k+1}}-1$ faults. This improvement in the fault tolerance capability can be observed in the simulation results presented in Section~\ref{exclusive_res}.

\begin{proposition}
\label{prop}
The maximum fraction of Byzantines that can be handled at the $(k+1)^{th}$ iteration by the proposed exclusion method based coding scheme is limited by $\alpha_f^k = \frac{2}{M}-\frac{M^k}{2^kN}$. 
\end{proposition}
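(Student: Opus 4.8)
The plan is to turn the fault-tolerance count established in the preceding subsection into a bound on the admissible Byzantine fraction, by carefully counting how many sensors and how many malicious nodes are present at the iteration in question; once these two quantities are in hand, the proposition follows by a single division.

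First I would fix the size of the problem at the $(k+1)^{th}$ iteration. Because the exclusion method retains two of the $M$ candidate regions at each step, the number of reporting sensors shrinks by a factor $2/M$ per iteration; starting from $N$ this gives $N_k=\frac{2^kN}{M^k}$ sensors, which is exactly the number of columns of the code matrix $C^k$ of size $M\times\frac{2^kN}{M^k}$ used in Section~\ref{exclusion}. Since the Byzantines constitute a fraction $\alpha$ of the network and, under the symmetric deployment assumed throughout, occupy the same proportion inside any retained region, the number of malicious sensors present at this iteration is $\alpha N_k=\alpha\frac{2^kN}{M^k}$, each contributing a single data-falsification fault.

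Next I would invoke the fault-tolerance result stated just above the proposition: with the two-best-region exclusion decoding applied to $C^k$ (whose minimum Hamming distance is $d_{m,k}=\frac{2^{k+1}N}{M^{k+1}}$), the $(k+1)^{th}$ iteration can tolerate at most $\frac{2^{k+1}N}{M^{k+1}}-1$ faults. For the scheme to decode the correct region we therefore require the induced faults not to exceed this budget, i.e. $\alpha\frac{2^kN}{M^k}\le\frac{2^{k+1}N}{M^{k+1}}-1$. Dividing both sides by $N_k=\frac{2^kN}{M^k}$ and simplifying the two resulting terms gives $\alpha\le\frac{2}{M}-\frac{M^k}{2^kN}$, so the maximum tolerable fraction is $\alpha_f^k=\frac{2}{M}-\frac{M^k}{2^kN}$, as claimed.

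The closing algebra is routine; the step I would flag as the genuine obstacle is the claim that the Byzantine fraction inside the retained ROI is still exactly $\alpha$ at every iteration. This is precisely where the symmetric-deployment assumption is needed: without it the adversary could concentrate its nodes in the regions that survive the exclusion step, in which case $\alpha N_k$ would understate the true worst-case fault count and the stated $\alpha_f^k$ would only be an optimistic estimate rather than a guarantee.
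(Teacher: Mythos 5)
Your proof is correct and follows essentially the same route as the paper's: both take the error-correcting (fault-tolerance) capability $\frac{2^{k+1}N}{M^{k+1}}-1$ of the code matrix $C^k$, note that $\frac{2^kN}{M^k}$ sensors report at the $(k+1)^{th}$ iteration, and obtain $\alpha_f^k=\frac{2}{M}-\frac{M^k}{2^kN}$ by dividing the former by the latter. Your closing caveat about the per-iteration Byzantine fraction is a reasonable observation, but it is not needed for the proposition as stated, since the claim (and the paper's one-line proof) concerns the fraction of faulty sensors among those present at that iteration, however they got there.
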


\begin{proof}
The proof is straight forward and follows from the fact that the error correcting capability of the code matrix $C^k$ at $(k+1)^{th}$ iteration is at most $\frac{2^{k+1}N}{M^{k+1}}-1$. Since there are $\frac{2^kN}{M^k}$ sensors present during this iteration, the fraction of Byzantine sensors that can be handled is given by $\alpha_f^k = \frac{2}{M}-\frac{M^k}{2^kN}$.
\end{proof}

The performance bounds on the basic coding scheme presented in Section \ref{det} can be extended to the exclusion based coding scheme presented in Section \ref{exclusion}. We skip the details for the sake of brevity of the paper. When there are Byzantines in the network, the probabilities $q_{i,j}^k$ of \eqref{eq:prob1} become
\begin{eqnarray}
&&q_{i,j}^k=\nn\\
&&1-\left[(1-\alpha)\bar{F}\left(\eta_i^k-a_i;\sigma^2\right)+\alpha\left(1-\bar{F}\left(\eta_i^k-a_i;\sigma^2\right)\right)\right].\nn
\end{eqnarray}

We have shown in Section \ref{det} that the detection probability at every iteration approaches `1' as the number of sensors $N$ goes to infinity. However, this result only holds when the condition in \eqref{condition} is satisfied. Notice that, in the presence of Byzantines, we have

\begin{eqnarray}
&&q_{i,j}^k=\nn\\
&&\begin{cases}
(1-\alpha)\left(1-\bar{F}\left(\eta_i^k-a_i;\sigma^2\right)\right)+\alpha \bar{F}\left(\eta_i^k-a_i;\sigma^2\right) , \nn\\\qquad\qquad\qquad\qquad\qquad\qquad\qquad\qquad\text{for $i \in S_j^k$}\\
(1-\alpha)\bar{F}\left(\eta_i^k-a_i;\sigma^2\right) + \alpha \left(1-\bar{F}\left(\eta_i^k-a_i;\sigma^2\right)\right), \nn\\\qquad\qquad\qquad\qquad\qquad\qquad\qquad\qquad\text{for $i \in S_l^k$}
\end{cases},
\end{eqnarray}
which can be simplified as

\begin{equation}
q_{i,j}^k=
\begin{cases}
(1-\alpha)-(1-2\alpha)\bar{F}\left(\eta_i^k-a_i;\sigma^2\right) , &\text{for $i \in S_j^k$}\\
\alpha+(1-2\alpha)\bar{F}\left(\eta_i^k-a_i;\sigma^2\right) , &\text{for $i \in S_l^k$}
\end{cases}.\label{eq:cases_q}
\end{equation}
Now using the pairwise sum approach discussed in Section \ref{det}, we can re-write \eqref{eq:pairwise_sum} as follows:
\begin{eqnarray}
&&q_{i_j,j}^k+q_{i_l,j}^k=\nn\\
&&1-(1-2\alpha)\left[\bar{F}\left(\eta-a_{i_j};\sigma^2\right)-\bar{F}\left(\eta-a_{i_l};\sigma^2\right)\right],\label{eq:pairwise_byz}
\end{eqnarray}
which is an increasing function of $\alpha$ since $\bar{F}\left(\eta-a_{i_j};\sigma^2\right)>\bar{F}\left(\eta-a_{i_l};\sigma^2\right)$ for all finite $\sigma$ as discussed before. Therefore, when $\alpha < 0.5$, the pairwise sum in \eqref{eq:pairwise_byz} is strictly less than 1 and the condition \eqref{condition} is satisfied. However, when $\alpha\geq 0.5$, $\sum_{i\in S_j^k\cup S_l^k}q_{i,j}^k\geq \frac{N_k}{M}$. Therefore, the condition fails when $\alpha \geq 0.5$. It has been shown in \cite{Vempaty_tsp} that the FC becomes `blind' to the local sensor's information when $\alpha \geq 0.5$. Next we state the theorem when there are Byzantines in the network.
\begin{theorem}
Let $\alpha$ be the fraction of Byzantines in the networks. Under Assumption~\eqref{assumption}, when $\alpha<0.5$, $\displaystyle\lim_{N\rightarrow\infty} P_D=1$.
\end{theorem}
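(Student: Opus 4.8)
The plan is to mirror the asymptotic argument already used in the non-adversarial theorem, since the only change introduced by Byzantines is the form of the crossover probabilities $q_{i,j}^k$, now given by \eqref{eq:cases_q}. The entire proof reduces to re-verifying the two ingredients that drove the earlier limit: that the per-iteration bound of Theorem~\ref{main-theorem} still applies, and that $\lambda^k_{\text{max}}$ remains trapped strictly between $-1$ and $0$ as $N$ grows.

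First I would invoke the pairwise computation carried out immediately before the statement. Equation \eqref{eq:pairwise_byz} shows that, under Assumption~\ref{assumption}, each matched pair contributes $q_{i_j,j}^k+q_{i_l,j}^k = 1-(1-2\alpha)\left[\bar{F}(\eta-a_{i_j};\sigma^2)-\bar{F}(\eta-a_{i_l};\sigma^2)\right]$, and since $a_{i_j}>a_{i_l}$ the bracketed difference is strictly positive for every finite $\sigma^2$. Hence for $\alpha<0.5$ the factor $(1-2\alpha)$ is positive and each pairwise sum is strictly below $1$; summing over the $N_k/M$ pairs yields $\sum_{i\in S_j^k\cup S_l^k}q_{i,j}^k < N_k/M = N/M^{k+1}$, which is exactly condition~\eqref{condition}. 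This is the step that genuinely consumes the hypothesis $\alpha<0.5$, and I expect it to be the only delicate point: the pairwise sum is increasing in $\alpha$ and attains the value $1$ precisely at $\alpha=0.5$, so the argument collapses exactly at the blinding threshold, consistent with \cite{Vempaty_tsp}. With \eqref{condition} in force, Lemma~\ref{lemma:bound} and the bound \eqref{eq:bound2} apply verbatim, so the per-iteration estimate $P_d^k \geq 1-(M-1)\left(1-(\lambda^k_{\text{max}})^2\right)^{d_{m,k}/2}$ of Theorem~\ref{main-theorem} continues to hold.

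Next I would bound $\lambda^k_{\text{max}}$ on both sides. The upper bound $\lambda^k_{\text{max}}<0$ is immediate from condition~\eqref{condition}, exactly as in the proof of Theorem~\ref{main-theorem}. For the lower bound I would read off from \eqref{eq:cases_q} that, when $\alpha<0.5$, every Byzantine crossover probability lies in $[\alpha,1-\alpha]$, so $2q_{i,j}^k-1\ge 2\alpha-1>-1$ for each sensor; averaging over the $d_{m,k}$ active positions gives $\lambda^k_{j,\text{max}}\ge 2\alpha-1>-1$, hence $\lambda^k_{\text{max}}>-1$, and therefore $0<1-(\lambda^k_{\text{max}})^2<1$. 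Finally, the limiting step is a direct transcription of the benign case: because the number of iterations $k^{stop}$ is finite and $d_{m,k}=2N/M^{k+1}$ grows linearly in $N$ for each $0\le k\le k^{stop}$, every factor $\left(1-(\lambda^k_{\text{max}})^2\right)^{d_{m,k}/2}$ tends to $0$, so each $P_d^k\to1$, and the finite product \eqref{final_pd} gives $\lim_{N\rightarrow\infty}P_D=1$.
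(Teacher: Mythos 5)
Your proposal is correct and follows essentially the same route as the paper, which states this theorem without a separate proof precisely because the argument is the preceding pairwise-sum discussion (establishing that \eqref{eq:pairwise_byz} stays strictly below $1$ for $\alpha<0.5$, hence condition \eqref{condition} holds) combined with the machinery of Theorem~\ref{main-theorem} and the benign-case limit argument. Your only addition is the explicit lower bound $\lambda^k_{j,\text{max}}\ge 2\alpha-1>-1$ via \eqref{eq:cases_q}, which is a clean substitute for the paper's earlier ``not all $q^k_{i,j}=0$'' reasoning (and for $\alpha=0$ one simply falls back on that earlier theorem).
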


Note that the performance bounds derived can be used for system design. Let us consider $N$ sensors uniformly deployed in a square region. Let this region be split into $M$ equal regions. From Proposition \ref{prop}, we know that $\alpha_f^k$ is a function of $M$ and $N$. Also, the detection probability equations and bounds derived in Section \ref{det} are functions of $M$ and $N$. Hence, for given fault tolerance capability and region detection probability requirements, we can find the corresponding number of sensors ($N_{req}$) to be used and the number of regions to be considered at each iteration ($M_{req}$). We now present guidelines for system design of a network which adopts the proposed approach. Let us suppose that we need to design a system such that we split into $M=4$ regions after every iteration. How should a system designer decide the number of sensors $N$ in order to meet the target region detection probability and Byzantine fault tolerance capability requirements? Table~\ref{table:design} shows the performance of the system in terms of the target region detection probability and Byzantine fault tolerance capability with varying number of sensors found using the expressions derived in Proposition~\ref{prop} and in Section~\ref{det}.

\begin{table}[htb]
\caption{Target region detection probability and Byzantine fault tolerance capability with varying $N$ ($M=4$)}
\begin{center}
\begin{tabular}{|p{.5in}|p{1.25in}|p{1.25in}|}
\hline
	$N$ & Target Region Detection probability  & Byzantine fault tolerance capability\\
   \hline 
   \hline
   32 & 0.4253 & 0.4688\\ 
   \hline
   128 & 0.6817 & 0.4844\\ 
   \hline
   512 & 0.6994 & 0.4922\\ 
   \hline
\end{tabular}
\end{center}
\label{table:design}
\end{table}

From Table~\ref{table:design}, we can observe that the performance improves with increasing number of sensors. However, as a system designer, we would like to minimize the number of sensors that need to be deployed while assuring a minimum performance guarantee. In this example, if we are interested in achieving a region detection probability of approximately 0.7 and a Byzantine fault tolerance capability close to 0.5, we get $N=512$ sensors to be sufficient.

\subsection{Simulation Results}
\label{exclusive_res}
In this section, we present the simulation results to evaluate the performance of the proposed schemes in the presence of Byzantine faults. We analyze the performance using two performance metrics: mean square error (MSE) of the estimated location and probability of detection ($P_D$) of the target region. We use a network of $N=512$ sensors deployed in a regular $8 \times 8$ grid as shown in Fig. \ref{model_split}. Let $\alpha$ denote the fraction of Byzantines in the network that are randomly distributed over the network. The received signal amplitude at the local sensors is corrupted by AWGN noise with standard deviation $\sigma=3$. The power at the reference distance is $P_0=200$. At every iteration, the ROI is split into $M=4$ equal regions as shown in Fig. \ref{model_split}. We stop the iterations for the basic coding scheme after $k^{stop}= 2$ iterations. The number of sensors in the ROI at the final step are, therefore, $32$. In order to have a fair comparison, we stop the exclusion method after $k^{stop}=4$ iterations, so that there are again $32$ sensors in the ROI at the final step.

Fig. \ref{MSE} shows the performance of the proposed schemes in terms of the MSE of the estimated target location when compared with the traditional maximum likelihood estimation described by \eqref{MLE}. The MSE has been found by performing $1 \times 10^3$ Monte Carlo runs with the true target location randomly chosen in the $8 \times 8$ grid. 

\begin{figure}[htb]
\centering
\includegraphics[width = 3.5in,height=!]{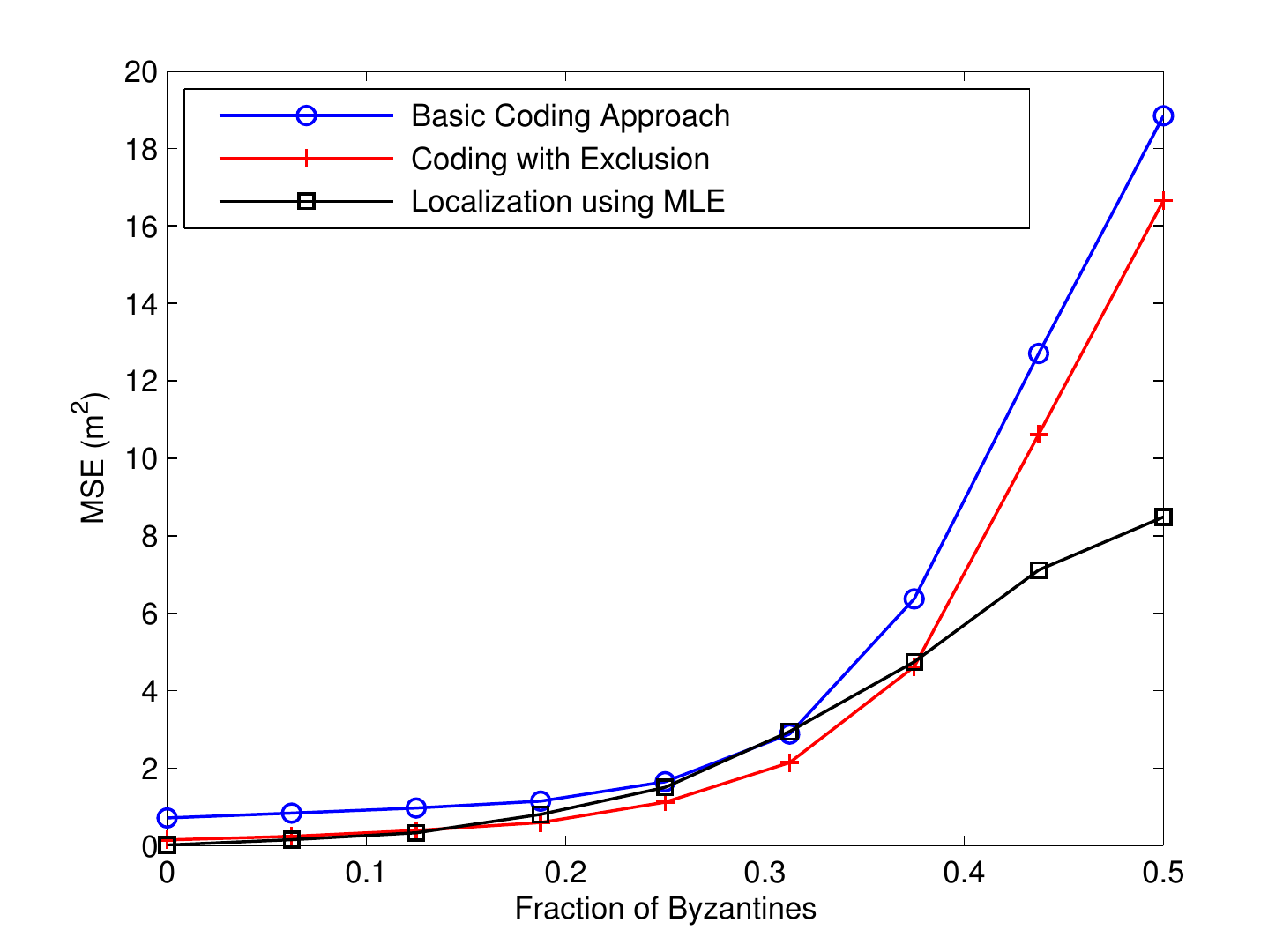}
\caption{MSE comparison of the three localization schemes}
\label{MSE}
\end{figure}

As can be seen from Fig. \ref{MSE}, the performance of the exclusion method based coding scheme is better than the basic coding scheme and outperforms the traditional MLE based scheme when $\alpha \le 0.375$. When $\alpha>0.375$  the traditional MLE based scheme has the best performance. 
However, it is important to note that the proposed schemes provide a coarse estimate as against the traditional MLE based scheme which optimizes over the entire ROI. Also, the traditional scheme is computationally much more expensive than the proposed coding based schemes. In the simulations performed, the proposed schemes are around 150 times faster than the conventional scheme when the global optimization toolbox in MATLAB was used for the optimization in ML based scheme. The computation time is very important in a scenario when the target is moving and a coarse location estimate is needed in a timely manner.

Fig. \ref{Pd_3} shows the performance of the proposed schemes in terms of the detection probability of the target region. The detection probability has been found by performing $1 \times 10^4$ Monte Carlo runs with the true target randomly chosen in the ROI. Fig. \ref{Pd_3} shows the reduction in the detection probability with increase in $\alpha$ when more sensors are Byzantines sending false information to the FC.

\begin{figure}[htb]
\centering
\includegraphics[width = 3.5in,height=!]{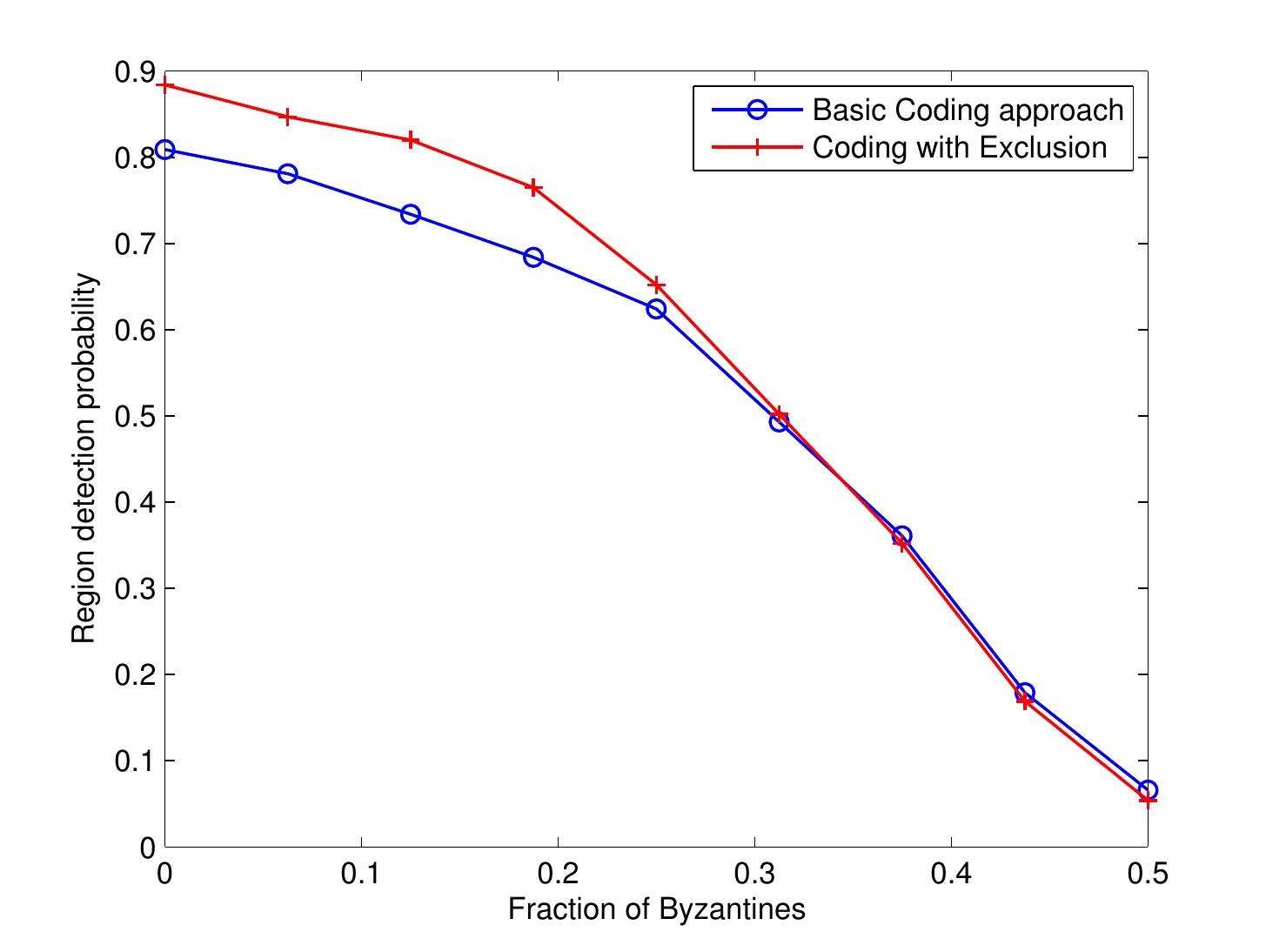}
\caption{Probability of detection of target region as a function of $\alpha$}
\label{Pd_3}
\end{figure}

In order to analyze the effect of the number of sensors on the performance, we perform simulations by changing the number of sensors and keeping the number of iterations the same as before. According to Proposition \ref{prop}, when $M=4$, the proposed scheme can asymptotically handle up to $50\%$ of the sensors being Byzantines. Figs. \ref{MSE_N} and \ref{Pd_N} show the effect of number of sensors on MSE and detection probability of the target region respectively when the exclusion method based coding scheme is used. As can be seen from both  figures (Figs. \ref{MSE_N} and \ref{Pd_N}), the fault-tolerance capability of the proposed scheme improves with increase in the number of sensors and approaches $\alpha_f^k=0.5$ asymptotically. Table \ref{table:pd2} shows the reduction of MSE with increasing $N$ for a fixed fraction of Byzantines, $\alpha$.

\begin{figure}[htb]
\centering
\includegraphics[width = 3.5in,height=!]{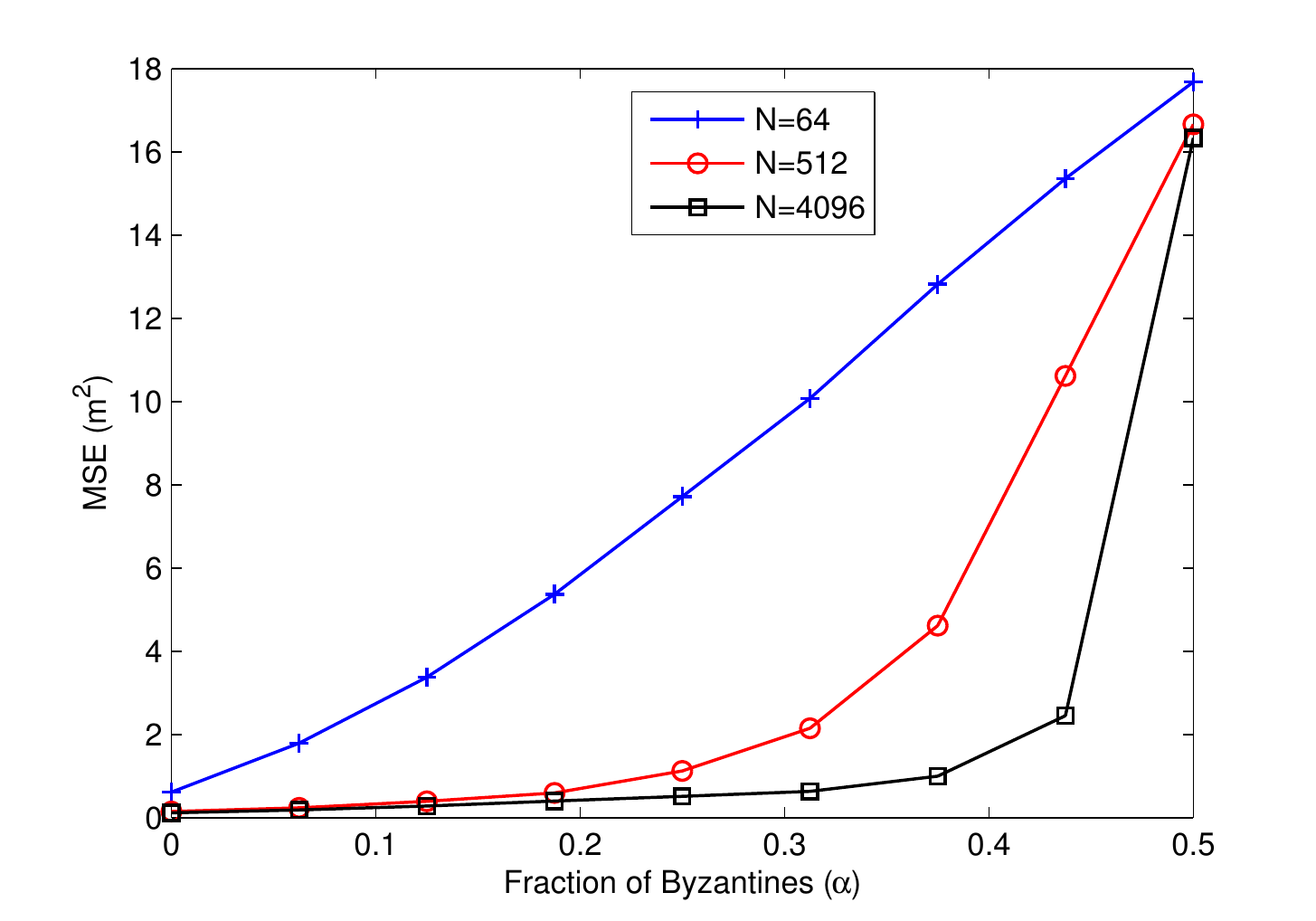}
\caption{MSE of the target location estimate with varying $N$}
\label{MSE_N}
\end{figure}

\begin{figure}[htb]
\centering
\includegraphics[width = 3.5in,height=!]{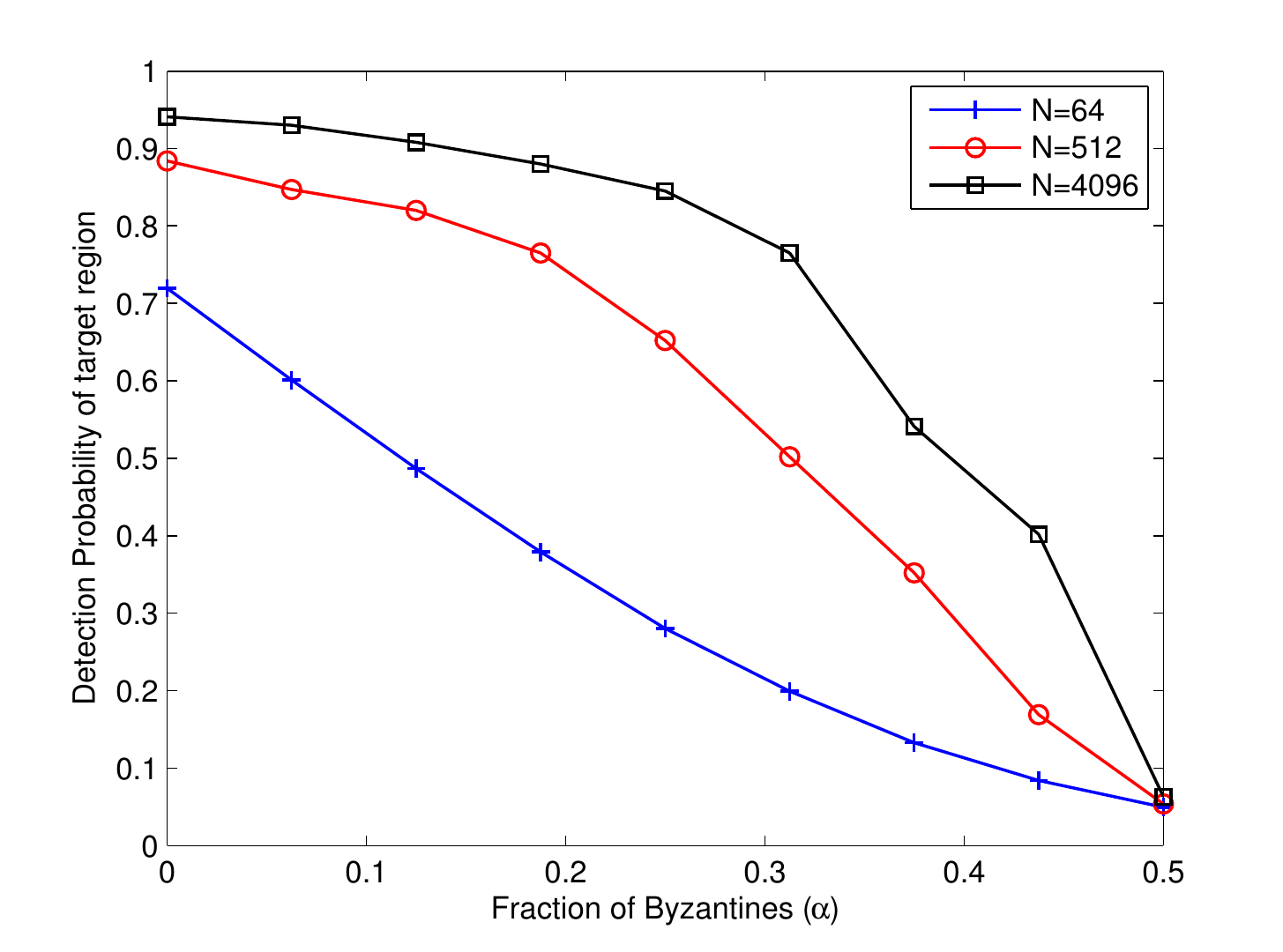}
\caption{Probability of detection of target region with varying $N$}
\label{Pd_N}
\end{figure}

\begin{table}[htb]
\caption{MSE of the target location estimate for fixed number of Byzantines ($\alpha =0.25$) with varying $N$}
\begin{center}
\begin{tabular}{|c|c|}
\hline
	$N$ & MSE ($m^2$) \\
   \hline 
   \hline
   64 & 7.79 \\ 
   \hline
   512 & 1.124\\ 
   \hline
   4096 & 0.5115\\ 
   \hline
\end{tabular}
\end{center}
\label{table:pd2}
\end{table}

\section{Soft-decision decoding for non-ideal channels}
\label{soft_dec}
In this section, we extend our scheme to counter the effect of non-ideal channels on system performance. Besides the faults due to the Byzantines in the network, the presence of non-ideal channels further degrades the localization performance. To combat the channel effects, we propose the use of a soft-decision decoding rule, at every iteration, instead of the minimum Hamming distance decoding rule. Note that the code design is independent of the hard-decoding or soft-decoding since according to the code-design, a sensor sends a `1' when the sensor decides that the target is in the same region as the sensor.

\subsection{Decoding rule}
\label{dec_rule}

At each iteration, the local sensors transmit their local decisions $\uu^k$ which are possibly corrupted due to the presence of Byzantines. Let the received analog data at the FC be represented as $\vv^k=[v_1^k,v_2^k,\cdots,v_{N_k}^k]$, where the received observations are related to the transmitted decisions as follows:

\begin{equation}
\label{eq:fading}
v_i^k=h_i^k(-1)^{u_i^k}\sqrt{E_b}+n_i^k, \qquad \text{$\forall i=\{1,\cdots,N_k\}$},
\end{equation}
where $h_i^k$ is the fading channel coefficient, $E_b$ is the energy per channel bit and $n_i^k$ is the additive white Gaussian noise with variance $\sigma^2_f$. In this paper, we assume the channel coefficients to be Rayleigh distributed with variance $\sigma_h^2$. 

We assume that the FC does not have knowledge of the fraction of Byzantines $\alpha$. Hence, instead of adopting the  reliability given in~\eqref{reliability-DCFECC}, we propose to use a simpler  reliability measure $\psi_i^k$ in our decoding rule that is not related to local decisions of sensors. It will be shown that this reliability measure performs well when there are Byzantines in the network. We define the reliability measure for each of the received bits as follows:

\begin{equation}
\label{eq:reliability}
\psi_i^k=\ln{\frac{P(v_i^k|u_i^k=0)}{P(v_i^k|u_i^k=1)}}
\end{equation}
for $i=\{1,\cdots,N\}$. Here $P(v_i^k|u_i^k)$ can be obtained from the statistical model of the Rayleigh fading channel considered in this paper. Define $F$-distance as $$d_F(\bpsi^k,\cc^k_{j+1})=\sum_{i=1}^{N_k}(\psi_i^k-(-1)^{c^k_{(j+1)i}})^2,$$ where $\bpsi^k=[\psi^k_1,\cdots,\psi^k_{N_k}]$ and $\cc^k_{j+1}$ is the $j^{th}$ row of the code matrix $C^k$. Then, the fusion rule is to decide the region $R_j^k$ for which the $F$-distance between $\bpsi^k$ and the row of $C^k$ corresponding to $R_j^k$ is minimized.

\subsection{Performance Analysis}
\label{sec:asymp_soft}
In this section, we present some bounds on the performance of the soft-decision decoding scheme in terms of the detection probability. Without loss of generality, we assume $E_b=1$. As mentioned before in \eqref{final_pd}, the overall detection probability is the product of the probability of detection at each iteration, $P_d^k$. We first present the following lemma without proof which is used to prove the theorem stated later in this section.

\begin{lemma}[\cite{Wang_twc06}]
\label{lemma:soft_dec}
Let $\tilde{\psi}_i^k=\psi_i^k-E[\psi_i^k|\theta]$, then

\begin{equation}
\label{eq:lemma}
E\left[(\tilde{\psi}_i^k)^2|\theta\right]\leq\frac{8}{\sigma^4}\left\{E[(h_i^k)^4]+E[(h_i^k)^2]\sigma^2_f\right\},
\end{equation}
\end{lemma}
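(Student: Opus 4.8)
The plan is to evaluate the reliability $\psi_i^k$ in closed form under the coherent (channel-state-informed) Gaussian model of \eqref{eq:fading}, and then bound its conditional variance directly. First I would write down the two conditional densities appearing in \eqref{eq:reliability}. With $E_b=1$ and the fading gain $h_i^k$ available at the FC, the received sample $v_i^k$ is Gaussian with mean $h_i^k(-1)^{u_i^k}$ and variance $\sigma_f^2$, so $P(v_i^k|u_i^k=0)$ and $P(v_i^k|u_i^k=1)$ are Gaussian densities centered at $+h_i^k$ and $-h_i^k$ respectively. Substituting these into \eqref{eq:reliability}, the Gaussian normalizing constants cancel and the two quadratic exponents collapse to a single bilinear term, giving the clean expression $\psi_i^k = 2 h_i^k v_i^k/\sigma_f^2$.

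Next I would insert the channel model $v_i^k = h_i^k(-1)^{u_i^k}+n_i^k$ to split the reliability into a ``signal'' part and a ``noise'' part,
\[
\psi_i^k = \underbrace{\frac{2(h_i^k)^2(-1)^{u_i^k}}{\sigma_f^2}}_{A} + \underbrace{\frac{2 h_i^k n_i^k}{\sigma_f^2}}_{B}.
\]
Conditioned on $\theta$, the transmitted bit $u_i^k$, the fading gain $h_i^k$, and the channel noise $n_i^k$ are mutually independent, with $n_i^k$ zero-mean of variance $\sigma_f^2$; these are the only probabilistic facts the argument needs. Because $B$ then already has zero conditional mean, centering gives $\tilde\psi_i^k = (A-E[A|\theta]) + B$, and the elementary inequality $(a+b)^2 \le 2a^2+2b^2$ yields $E[(\tilde\psi_i^k)^2|\theta] \le 2\,\mathrm{Var}(A|\theta) + 2\,E[B^2|\theta]$, which conveniently avoids having to track the cross term at all.

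It then remains to bound the two pieces. Since $((-1)^{u_i^k})^2=1$ and $h_i^k$ is independent of $u_i^k$, I get $\mathrm{Var}(A|\theta) \le E[A^2|\theta] = 4E[(h_i^k)^4]/\sigma_f^4$; and since $n_i^k$ is independent of $h_i^k$, zero-mean with variance $\sigma_f^2$, I get $E[B^2|\theta] = 4E[(h_i^k)^2]\sigma_f^2/\sigma_f^4 = 4E[(h_i^k)^2]/\sigma_f^2$. Adding these with the factor of two produces exactly $\frac{8}{\sigma_f^4}\{E[(h_i^k)^4]+E[(h_i^k)^2]\sigma_f^2\}$, matching \eqref{eq:lemma} once the $\sigma$ in the statement is read as the channel-noise standard deviation $\sigma_f$ inherited from \cite{Wang_twc06}.

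The computation itself is routine; the two places that need care are, first, justifying the closed form $\psi_i^k = 2 h_i^k v_i^k/\sigma_f^2$, i.e., committing to coherent detection with channel-state information at the FC so that the log-likelihood ratio is exactly bilinear, and second, the independence bookkeeping that lets each conditional expectation factor into a channel moment times a bit/noise expectation. The deliberately loose step is the pair $(a+b)^2\le 2a^2+2b^2$ and $\mathrm{Var}(A)\le E[A^2]$: the exact conditional variance is only about half as large (it equals $\frac{4}{\sigma_f^4}\{E[(h_i^k)^4]-E[(h_i^k)^2]^2(E[(-1)^{u_i^k}|\theta])^2\}+\frac{4}{\sigma_f^2}E[(h_i^k)^2]$), so the factor $8$ in the statement is a clean over-estimate rather than a tight constant, and I would flag that the same bound goes through with a smaller constant if one retains the exact cancellation of the cross term.
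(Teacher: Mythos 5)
The paper never proves this lemma: it is stated explicitly ``without proof'' and imported from \cite{Wang_twc06}, so there is no internal argument to compare yours against. Judged on its own, your derivation is correct under the assumptions you state, and it is the natural direct proof for the reliability \eqref{eq:reliability} that this paper actually uses. The closed form $\psi_i^k=2h_i^kv_i^k/\sigma_f^2$, the split $\psi_i^k=A+B$ with $A=2(h_i^k)^2(-1)^{u_i^k}/\sigma_f^2$ and $B=2h_i^kn_i^k/\sigma_f^2$, the bound $E[(\tilde{\psi}_i^k)^2|\theta]\le 2\mathrm{Var}(A|\theta)+2E[B^2|\theta]$, and the two moment computations all check out; your observation that the cross term vanishes exactly (so the constant $8$ is loose by roughly a factor of two) is also correct. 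Note in addition that the source lemma in \cite{Wang_twc06} concerns the mixture reliability \eqref{reliability-DCFECC} rather than the pure likelihood ratio \eqref{eq:reliability}; a proof for that version needs an extra step (bounding the log-ratio of two convex combinations of $P(v_i^k|u_i^k=0)$ and $P(v_i^k|u_i^k=1)$ by the extreme log-ratio $|2h_i^kv_i^k/\sigma_f^2|$, then squaring), which your simpler setting lets you skip entirely.

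Two caveats you raise deserve to be made explicit, and you resolve both in the right direction. First, the identity $\psi_i^k=2h_i^kv_i^k/\sigma_f^2$ requires that the FC evaluate $P(v_i^k|u_i^k)$ conditioned on a \emph{known} realization of $h_i^k$ (coherent reception with channel state information); if one instead reads ``obtained from the statistical model of the Rayleigh fading channel'' as marginalizing over the gain, the log-likelihood ratio is no longer bilinear in $(h_i^k,v_i^k)$ and your argument does not go through as written. The CSI reading is nevertheless the only one under which the stated bound's dependence on $E[(h_i^k)^4]$ and $E[(h_i^k)^2]$ is natural, so committing to it is justified. Second, your identification of the $\sigma$ in \eqref{eq:lemma} with the channel-noise standard deviation $\sigma_f$ contradicts the paper's annotation (which calls $\sigma^2$ the local-sensor observation noise variance), but the annotation cannot be right: the sensor noise enters $E[(\tilde{\psi}_i^k)^2|\theta]$ only through the law of $u_i^k$, and since $((-1)^{u_i^k})^2=1$ that dependence drops out of any second-moment bound --- indeed, as the sensor noise grows the true variance of $A$ does not shrink, so a bound scaling as $1/\sigma^4$ in the sensor noise would fail. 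The mixed notation is an artifact of transcription from \cite{Wang_twc06}, where the channel noise variance is denoted $\sigma^2$, and your reading is the one under which the lemma is true.
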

where $\sigma^2$ is the variance of the noise at the local sensors whose observations follow \eqref{eq:measurement}. For the Rayleigh fading channel considered in this paper, both $E[(h_i^k)^4]$ and $E[(h_i^k)^2]$ are bounded and, therefore, the LHS of \eqref{eq:lemma} is also bounded.

\begin{lemma}
\label{lemma:soft_dec}
Let $\theta \in R_j^k$ be the fixed target location. Let $P_{e,j}^k(\theta)$ be the misclassification probability of the target region given $\theta \in R_j^k$ at the $(k+1)^{th}$ iteration. For the reliability vector $\bpsi^k=[\psi^k_1,\cdots,\psi^k_{N_k}]$ of the $N_k=N/M^k$ observations and code matrix $C^k$ used at the $(k+1)^{th}$ iteration, 
\begin{multline}
\label{eq:soft_dec_lemma}
P_{e,j}^k(\theta)\leq\sum_{0\leq l\leq M-1, l\neq j}\\
P\left\{\sum_{i\in S_j^k \cup S_l^k} Z_i^{jl}\tilde{\psi}_i^k\leq-\sum_{i\in S_j^k \cup S_l^k}Z_i^{jl}E[\psi_i^k|\theta]\bigg|\theta\right\},
\end{multline}
where $Z_i^{jl}=\frac{1}{2}((-1)^{c_{(j+1)i}^k}-(-1)^{c_{(l+1)i}^k})$.
\end{lemma}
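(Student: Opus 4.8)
The plan is to mirror the structure of the proof of Lemma~\ref{lemma:bound}, replacing the Hamming-distance comparison by the $F$-distance comparison and exploiting the fact that, for codewords with $\pm1$ entries, the $F$-distance difference collapses to a linear functional of $\bpsi^k$. First I would apply a union bound over the competing hypotheses. Given $\theta\in R_j^k$, a misclassification occurs only if $R_j^k$ fails to be the unique $F$-distance minimizer, so, counting ties conservatively as errors (valid for an upper bound under the random tie-breaking rule),
\begin{equation}
P_{e,j}^k(\theta)\leq\sum_{0\leq l\leq M-1,\,l\neq j}P\left\{d_F(\bpsi^k,\cc_{j+1}^k)\geq d_F(\bpsi^k,\cc_{l+1}^k)\,\big|\,\theta\right\}.
\end{equation}

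The key algebraic step is to expand the pairwise $F$-distance difference. Writing $a_i=(-1)^{c_{(j+1)i}^k}$ and $b_i=(-1)^{c_{(l+1)i}^k}$, I would use the identity $(\psi_i^k-a_i)^2-(\psi_i^k-b_i)^2=2\psi_i^k(b_i-a_i)+(a_i^2-b_i^2)$. Since $c_{(j+1)i}^k,c_{(l+1)i}^k\in\{0,1\}$, both $a_i$ and $b_i$ lie in $\{-1,+1\}$, hence $a_i^2=b_i^2=1$ and the quadratic terms cancel. This cancellation is precisely what makes the soft-decision comparison tractable, and, after identifying $b_i-a_i=-2Z_i^{jl}$ from the definition of $Z_i^{jl}$, it leaves
\begin{equation}
d_F(\bpsi^k,\cc_{j+1}^k)-d_F(\bpsi^k,\cc_{l+1}^k)=-4\sum_{i=1}^{N_k}Z_i^{jl}\psi_i^k.
\end{equation}
Because each column of $C^k$ carries a single `1', rows $j+1$ and $l+1$ differ only on $S_j^k\cup S_l^k$, so $Z_i^{jl}=0$ off this set and the sum restricts to $i\in S_j^k\cup S_l^k$. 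Thus the pairwise event $d_F(\bpsi^k,\cc_{j+1}^k)\geq d_F(\bpsi^k,\cc_{l+1}^k)$ is equivalent to $\sum_{i\in S_j^k\cup S_l^k}Z_i^{jl}\psi_i^k\leq 0$.

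Finally I would center the reliabilities by substituting $\psi_i^k=\tilde{\psi}_i^k+E[\psi_i^k|\theta]$, the decomposition introduced in the preceding lemma, splitting the sum and moving the deterministic mean term to the right-hand side. This rewrites the event as $\sum_{i\in S_j^k\cup S_l^k}Z_i^{jl}\tilde{\psi}_i^k\leq-\sum_{i\in S_j^k\cup S_l^k}Z_i^{jl}E[\psi_i^k|\theta]$, and substituting back into the union bound yields \eqref{eq:soft_dec_lemma}. I do not expect a genuine obstacle in this lemma: the argument is essentially deterministic bookkeeping, with the only delicate point being the cancellation of the quadratic terms (which hinges on the codeword entries being $\pm1$) together with the confinement of the nonzero $Z_i^{jl}$ to $S_j^k\cup S_l^k$. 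The substantive analytic work, namely showing that this tail probability actually decays, is where the second-moment control of $\tilde{\psi}_i^k$ from the preceding moment lemma and a Chebyshev/Chernoff-type estimate will be needed; that step I would defer to the theorem that follows.
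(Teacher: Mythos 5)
Your proposal is correct and follows essentially the same route as the paper's proof: a union bound over the competing hypotheses, cancellation of the quadratic terms in the $F$-distance difference (valid because the codeword entries map to $\pm1$), restriction of the resulting linear functional to $S_j^k\cup S_l^k$ where $Z_i^{jl}\neq 0$, and finally centering $\psi_i^k=\tilde{\psi}_i^k+E[\psi_i^k|\theta]$. The only cosmetic difference is that you make the algebraic identity and the factor $-4$ explicit, while the paper states the per-term equivalence directly.
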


\begin{proof}
\begin{eqnarray}
&&P_{e,j}^k(\theta)\nn\\
&=&P\{\text{detected region} \neq R_j^k|\theta\}\nonumber\\
&\leq&P\left\{d_F(\bpsi^k,\cc_{j+1}^k)\geq\min_{0\leq l\leq M-1, l\neq j} d_F(\bpsi^k,\cc_{l+1}^k)|\theta\right\}\nonumber\\
&\leq&\sum_{0\leq l\leq M-1, l\neq j}P\left\{d_F(\bpsi^k,\cc_{j+1}^k)\geq d_F(\bpsi^k,\cc_{l+1}^k)|\theta\right\}\nonumber\\
&=&\sum_{0\leq l\leq M-1, l\neq j}\nn\\
&&P\left\{\sum_{i=1}^{N_k}(\psi_i^k-(-1)^{c_{(j+1)i}^k})^2\geq (\psi_i^k-(-1)^{c_{(l+1)i}^k})^2|\theta\right\}\nonumber\\
&=&\sum_{0\leq l\leq M-1, l\neq j}P\left\{\sum_{i\in S_j^k \cup S_l^k}Z_i^{jl}\psi_i^k \leq0\bigg| \theta\right\}\label{eq:explain}\\
&=&\sum_{0\leq l\leq M-1, l\neq j}P\left\{\sum_{i\in S_j^k \cup S_l^k}Z_i^{jl}\tilde{\psi}_i^k \leq-\sum_{i\in S_j^k \cup S_l^k}Z_i^{jl} E[\psi_i^k|\theta]\bigg| \theta\right\}\nn,
\end{eqnarray}
where \eqref{eq:explain} comes from the fact that 
\begin{eqnarray}
&&(\psi_i^k-(-1)^{c_{(j+1)i}^k})^2- (\psi_i^k-(-1)^{c_{(l+1)i}^k})^2 \geq 0\nonumber\\
&\iff&-2((-1)^{c_{(j+1)i}^k}-(-1)^{c_{(l+1)i}^k})\psi_i^k\geq 0\nonumber\\
&\iff&Z_i^{jl}\psi_i^k \leq0\nonumber
\end{eqnarray}

\end{proof}

Let $\sigma_{\tilde{\psi}}^2(\theta)=\sum_{i\in S_j^k\cup S_l^k}E\left[(Z_i^{jl}\tilde{\psi}_i^k)^2|\theta\right]=\sum_{i\in S_j^k\cup S_l^k}E\left[(\tilde{\psi}_i^k)^2|\theta\right]$, then the above result can be re-written as 

\begin{multline}
\label{eq:tilde}
P_{e,j}^k(\theta) \leq \sum_{0\leq l\leq M-1, l\neq j}\\
P\left\{\frac{1}{\sigma_{\tilde{\psi}}(\theta)}\sum_{i\in S_j^k \cup S_l^k} Z_i^{jl}\tilde{\psi}_i^k<-\frac{1}{\sigma_{\tilde{\psi}}(\theta)}\sum_{i\in S_j^k \cup S_l^k}Z_i^{jl}E[\psi_i^k|\theta]\bigg|\theta\right\}.
\end{multline}
Under the assumption that $\frac{N}{M^{k+1}} \to \infty$ as $N \to \infty$ for $k=0,\cdots,k^{stop}$, we have the following result for asymptotic performance of the proposed soft-decision rule decoding based scheme.

\begin{theorem}
\label{asymp}
Under Assumption \eqref{assumption}, when $\alpha <0.5$,  $$\lim_{N \to \infty}P_D=1.$$
\end{theorem}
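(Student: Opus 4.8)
The plan is to establish $\lim_{N\to\infty}P_D=1$ by showing that each per-iteration detection probability $P_d^k\to 1$, and then invoking the finite-product argument already used in the previous asymptotic theorem. The overall structure of this proof should mirror the no-Byzantine case closely, so I would first isolate what changes when $\alpha>0$: only the probabilities $q_{i,j}^k$ are modified, now given by the two-case expression in \eqref{eq:cases_q}, and consequently so is the pairwise sum in \eqref{eq:pairwise_byz}. The key quantity to control is $\lambda^k_{j,\text{max}}(\theta)=\frac{1}{d_{m,k}}\sum_{i\in S_j^k\cup S_l^k}(2q_{i,j}^k-1)$, and I would show that under Assumption~\ref{assumption} and $\alpha<0.5$ it stays bounded away from $0$ from below while remaining strictly negative.

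First I would verify that condition~\eqref{condition} still holds. This is already essentially done in the text preceding the theorem: using the pairwise summation from Assumption~\ref{assumption}, the sum $q_{i_j,j}^k+q_{i_l,j}^k=1-(1-2\alpha)[\bar F(\eta-a_{i_j};\sigma^2)-\bar F(\eta-a_{i_l};\sigma^2)]$ is strictly less than $1$ precisely because $a_{i_j}>a_{i_l}$ makes the bracketed difference positive, and the factor $(1-2\alpha)$ is strictly positive when $\alpha<0.5$. Summing over all $N_k/M$ pairs in the bijection gives $\sum_{i\in S_j^k\cup S_l^k}q_{i,j}^k<N_k/M=d_{m,k}/2$, so \eqref{condition} is satisfied and $\lambda^k_{j,\text{max}}(\theta)<0$. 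Hence the bound \eqref{eq:bound2} from Lemma~\ref{lemma:bound} applies verbatim, and the chain of inequalities \eqref{theorem:main-bound-2}--\eqref{theorem:main-bound} carries over to give $P_d^k\geq 1-(M-1)(1-(\lambda^k_{\text{max}})^2)^{d_{m,k}/2}$, now with the Byzantine-modified $\lambda^k_{\text{max}}$.

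Next I would reproduce the lower bound $\lambda^k_{\text{max}}>-1$, exactly as in the earlier asymptotic proof: since each $2q_{i,j}^k-1\geq -1$ and not all the $q^k_{i,j}$ vanish, the average exceeds $-1$, giving $0<1-(\lambda^k_{\text{max}})^2<1$. With $d_{m,k}=\frac{2N}{M^{k+1}}$ growing linearly in $N$ under the assumption $\frac{N}{M^{k+1}}\to\infty$, the term $(1-(\lambda^k_{\text{max}})^2)^{d_{m,k}/2}\to 0$, so $\lim_{N\to\infty}P_d^k=1$ for each fixed $k$. Taking the finite product over $k=0,\ldots,k^{stop}$ then yields $\lim_{N\to\infty}P_D=1$, completing the argument.

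The main obstacle, and the only genuinely new point relative to the benign case, is confirming that the lower bound $\lambda^k_{\text{max}}>-1$ remains strict in the presence of Byzantines so that $(\lambda^k_{\text{max}})^2$ stays strictly below $1$; this hinges on the observation that even with flipping, not all $q^k_{i,j}$ collapse to zero (equivalently, $\lambda^k_{j,\text{max}}$ does not degenerate to $-1$), which follows because the $\bar F$ terms are strictly positive for finite $\sigma$. I would emphasize that the role of $\alpha<0.5$ is exactly to keep $(1-2\alpha)>0$ so that the pairwise sums stay below $1$ and condition~\eqref{condition} is not violated; at $\alpha=0.5$ the sum equals $1$, the FC goes `blind', and the convergence breaks down, consistent with the discussion surrounding \eqref{eq:pairwise_byz}.
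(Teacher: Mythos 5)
There is a genuine gap: you have proved the wrong theorem. The statement labeled Theorem~\ref{asymp} is the asymptotic result for the \emph{soft-decision decoding} scheme of Section~\ref{soft_dec}, where the sensor decisions reach the FC through Rayleigh fading channels as in \eqref{eq:fading}, the FC never observes the bits $u_i^k$ but only the analog values $v_i^k$, and decoding is by minimum $F$-distance on the reliabilities $\psi_i^k$ of \eqref{eq:reliability}. Your entire argument, by contrast, analyzes minimum Hamming distance decoding on the bits themselves: you verify condition \eqref{condition} using the Byzantine-modified probabilities \eqref{eq:cases_q}--\eqref{eq:pairwise_byz} and then invoke Lemma~\ref{lemma:bound} and the chain \eqref{theorem:main-bound-2}--\eqref{theorem:main-bound}. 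That is precisely the paper's (unwritten) proof of the theorem at the end of Section~\ref{sec:byz}, which assumes ideal channels. It does not apply here: Lemma~\ref{lemma:bound} presupposes a received binary vector, your $q_{i,j}^k$ account only for Byzantine flipping and not for channel-induced errors, and nothing in your argument touches the fading coefficients $h_i^k$, the channel noise $\sigma_f^2$, or the actual decoder whose error probability $P_D$ the theorem is about. Bounding the performance of a different decoder in a different channel model says nothing about the $F$-distance decoder's $P_D$.

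The paper's proof has to do genuinely different work. It starts from Lemma~\ref{lemma:soft_dec}, which bounds $P_{e,j}^k(\theta)$ by probabilities involving the centered reliabilities $\tilde{\psi}_i^k$ as in \eqref{eq:tilde}; it then shows the drift term $\sum_{i\in S_j^k\cup S_l^k}Z_i^{jl}E[\psi_i^k|\theta]$ diverges when $\alpha<0.5$, using the same pairwise trick from Assumption~\ref{assumption} but combined with a new ingredient: $E[\psi_i^k|u_i^k=1]-E[\psi_i^k|u_i^k=0]=-D(P(v_i^k|u_i^k=1)\|P(v_i^k|u_i^k=0))-D(P(v_i^k|u_i^k=0)\|P(v_i^k|u_i^k=1))<0$, a Kullback--Leibler divergence argument that quantifies how informative the noisy channel output is about the transmitted bit. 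Finally, because the conditional variance of $\tilde{\psi}_i^k$ is bounded (the lemma imported from the DCSD paper), the Lindeberg central limit theorem applies, the normalized sum tends to a standard Gaussian, and the error probability is driven to zero by a $Q$-function whose argument diverges. None of these steps (the divergence of the reliability drift, the KL-divergence sign, the variance bound, the CLT) appears in your proposal, and they cannot be replaced by the Hamming-distance machinery you use; so the proposal, while essentially a correct proof of the Section~\ref{sec:byz} theorem, leaves Theorem~\ref{asymp} unproved.
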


\begin{proof}
First we prove that when  $\alpha<0.5$, then
\begin{equation}
\label{eq:cond3}
\sum_{i\in S_j^k \cup S_l^k}Z_i^{jl}E[\psi_i^k|\theta] \to \infty,
\end{equation}
where $Z_i^{jl}=\frac{1}{2}((-1)^{c_{(j+1)i}^k}-(-1)^{c_{(l+1)i}^k})$.  Based on our code matrix design, $Z_i^{jl}$ for $i \in S_j^k\cup S_l^k$ is given as
\begin{equation}
Z_i^{jl}=
\begin{cases}
-1, &\text{for $i \in S_j^k$}\\
+1,&\text{for $i\in S_l^k$}
\end{cases}.
\end{equation} 

By using the pairwise summation approach discussed in Section \ref{det}, we notice that, for every sensor $i_j \in S_j^k$ and its corresponding sensor $i_l \in S_l^k$, when $\theta \in R_j^k$,
\begin{eqnarray}
Z_{i_j}^{jl}E[\psi_{i_j}^k|\theta]+Z_{i_l}^{jl}E[\psi_{i_l}^k|\theta]=E[(\psi_{i_l}^k-\psi_{i_j}^k)|\theta].
\end{eqnarray}

Now, for a given sensor $i$, we have the following,
\begin{eqnarray}
&&E[\psi_{i}^k|\theta]\nn\\
&=&P(u_i^k=0|\theta)E[\psi_{i}^k|\theta,u_i^k=0]\nn\\
&+&P(u_i^k=1|\theta)E[\psi_{i}^k|\theta,u_i^k=1]\\
&=&(1-P(u_i^k=1|\theta))E[\psi_{i}^k|u_i^k=0]\nn\\
&+&P(u_i^k=1|\theta)E[\psi_{i}^k|u_i^k=1]\\
&=&E[\psi_{i}^k|u_i^k=0]\nn\\
&+&P(u_i^k=1|\theta)\left[E[\psi_{i}^k|u_i^k=1]-E[\psi_{i}^k|u_i^k=0]\right],
\end{eqnarray}
where we used the facts that $P(u_i^k=0|\theta)+P(u_i^k=1|\theta)=1$ and that the value of $\psi_i^k$ depends only on $u_i^k$.

Note that the channel statistics are  the same for both the sensors. Therefore, $E[\psi_{i}^k|u_i^k=d]$ for $d=\{0,1\}$ given by
\begin{eqnarray}
E[\psi_{i}^k|u_i^k=d]=E\left[\ln\frac{P(v_{i}^k|u_{i}^k=0)}{P(v_{i}^k|u_{i}^k=1}\Bigg|u_i^k=d\right]\nn
\end{eqnarray}
is the same for both the sensors.

The pairwise sum $E[(\psi_{i_l}^k-\psi_{i_j}^k)|\theta]$ now simplifies to the following,
\begin{eqnarray}
&&E[(\psi_{i_l}^k-\psi_{i_j}^k)|\theta]\nn\\
&=&E[\psi_{i}^k|u_{i}^k=0]\nn\\
&+&P(u_{i_l}^k=1|\theta)\left[E[\psi_{i}^k|u_{i}^k=1]-E[\psi_{i}^k|u_{i}^k=0]\right]\nn\\
&-&E[\psi_{i}^k|u_{i}^k=0]\nn\\
&-&P(u_{i_j}^k=1|\theta)\left[E[\psi_{i}^k|u_{i}^k=1]-E[\psi_{i}^k|u_{i}^k=0]\right]\nn\\
&=&\left(P(u_{i_l}^k=1|\theta)-P(u_{i_j}^k=1|\theta)\right)\nn\\
&&\left[E[\psi_{i}^k|u_{i}^k=1]-E[\psi_{i}^k|u_{i}^k=0]\right].\label{eq:soft_dec_pair}
\end{eqnarray}
When $\theta\in R_j^k$, we have
\begin{eqnarray}
P(u_{i_j}^k=1|\theta)=\alpha+(1-2\alpha)\bar{F}\left(\eta-a_{i_j}\right)\\
P(u_{i_l}^k=1|\theta)=\alpha+(1-2\alpha)\bar{F}\left(\eta-a_{i_l}\right)
\end{eqnarray}
since the thresholds corresponding to sensors $i_j$ and $i_l$ are same due to Assumption \ref{assumption}. Therefore, 
\begin{equation}
P(u_{i_l}^k=1|\theta)-P(u_{i_j}^k=1|\theta)=(1-2\alpha)\left(\bar{F}\left(\eta-a_{i_l}\right)-\bar{F}\left(\eta-a_{i_j}\right)\right).\label{eq:soft_dec_alpha}
\end{equation}
Note that, since $\theta\in R_j^k$, $\bar{F}\left(\eta-a_{i_l}\right)<\bar{F}\left(\eta-a_{i_j}\right)$. Next we prove that 
\begin{equation}
E[\psi_{i}^k|u_{i}^k=1]-E[\psi_{i}^k|u_{i}^k=0]<0\label{E-E}
\end{equation}
for all finite noise variance of the fading channel ($\sigma_f^2$).

\begin{eqnarray}
&&
E[\psi_{i}^k|u_{i}^k=1]-E[\psi_{i}^k|u_{i}^k=0]\nonumber\\
&=&E\left[\ln\frac{P(v_{i}^k|u_{i}^k=0)}{P(v_{i}^k|u_{i}^k=1)}\Bigg|u_i^k=1\right]\nn\\
&-&E\left[\ln\frac{P(v_{i}^k|u_{i}^k=0)}{P(v_{i}^k|u_{i}^k=1)}\Bigg|u_i^k=0\right]\nn\\
&=&\int_{-\infty}^{\infty}P(v_{i}^k|u_{i}^k=1)\ln\frac{P(v_{i}^k|u_{i}^k=0)}{P(v_{i}^k|u_{i}^k=1)}\ dv_i^k\nn\\
&-&\int_{-\infty}^{\infty}P(v_{i}^k|u_{i}^k=0)\ln\frac{P(v_{i}^k|u_{i}^k=0)}{P(v_{i}^k|u_{i}^k=1)}\ dv_i^k\nn\\
&=&-D(P(v_{i}^k|u_{i}^k=1)||P(v_{i}^k|u_{i}^k=0))\nn\\
&&-D(P(v_{i}^k|u_{i}^k=0)||P(v_{i}^k|u_{i}^k=1)),
\end{eqnarray}
where $D(p||q)$ is the Kullback-Leiber distance between probability distributions $p$ and $q$. Since $P(v_{i}^k|u_{i}^k=1)\neq P(v_{i}^k|u_{i}^k=0)$ for all finite $\sigma_f^2$, we have
$D(P(v_{i}^k|u_{i}^k=1)||P(v_{i}^k|u_{i}^k=0))>0$ and $D(P(v_{i}^k|u_{i}^k=0)||P(v_{i}^k|u_{i}^k=1))>0$. This concludes that $E[\psi_{i}^k|u_{i}^k=1]-E[\psi_{i}^k|u_{i}^k=0]<0$. Hence, when $\alpha <1/2$, from \eqref{eq:soft_dec_pair}, \eqref{eq:soft_dec_alpha}, and \eqref{E-E}, $E[(\psi_{i_l}^k-\psi_{i_j}^k)|\theta]>0$ and the condition $\sum_{i\in S_j^k \cup S_l^k}Z_i^{jl}E[\psi_i^k|\theta] \to \infty$ is satisfied. 

We now show that when the condition \eqref{eq:cond3} is satisfied, the proposed scheme asymptotically attains perfect detection probability. 
\begin{eqnarray}
&&\lim_{N\to\infty}P_D\nn\\
&=&\lim_{N\to\infty}\prod_{k=0}^{k^{stop}}P_d^k\nn\\
&=&\prod_{k=0}^{k^{stop}}\lim_{N\to\infty}\Bigg[1-\nn\\
&&\sum_{j=0}^{M-1}P\left\{\theta\in R_j^k\right\}P\left\{\text{detected region}\neq R_j^k|\theta \in R_j^k\right\}\Bigg]\nn\\
&=&\prod_{k=0}^{k^{stop}}\lim_{N\to\infty}\Bigg[1-\frac{1}{M}\sum_{j=0}^{M-1}\nn\\
&&\int_{\theta}P\left\{\theta|\theta\in R_j^k\right\}P\left\{\text{detected region}\neq R_j^k|\theta,\theta \in R_j^k\right\}d\theta\Bigg].\nn\\
\end{eqnarray}
Define 
\begin{equation}
P_{e,j,\text{max}}^k\defeq \max_{\theta\in R_j^k}P_{e,j}^k(\theta)
\end{equation}
and 
\begin{equation}
P_{e,\text{max}}^k\defeq \max_{0\leq j\leq M-1}P_{e,j,\text{max}}^k.
\end{equation}

Then,
\begin{eqnarray}
&&\lim_{N\to\infty}P_D\nn\\
&=&\prod_{k=0}^{k^{stop}}\lim_{N\to\infty}\Bigg[1-\frac{1}{M}\sum_{j=0}^{M-1}\int_{\theta}P\left\{\theta|\theta\in R_j^k\right\}P_{e,j}^k(\theta) d\theta\Bigg]\nn\\
&\geq&\prod_{k=0}^{k^{stop}}\lim_{N\to\infty}\Bigg[1-\frac{1}{M}\nn\\
&&\sum_{j=0}^{M-1}\int_{\theta\in R_j^k}P\left\{\theta|\theta\in R_j^k\right\}P_{e,j,\text{max}}^kd\theta\Bigg]\nn\\
&=&\prod_{k=0}^{k^{stop}}\lim_{N\to\infty}\Bigg[1-\nn\\
&&\frac{1}{M}\sum_{j=0}^{M-1}P_{e,j,\text{max}}^k\int_{\theta\in R_j^k}P\left\{\theta|\theta\in R_j^k\right\}d\theta\Bigg]\nn\\
&\geq&\prod_{k=0}^{k^{stop}}\lim_{N\to\infty}\Bigg[1-\frac{P_{e,\text{max}}^k}{M}\sum_{j=0}^{M-1}1\Bigg]\nn\\
&=&\prod_{k=0}^{k^{stop}}\Bigg[1-\lim_{N\to\infty}P_{e,\text{max}}^k\Bigg].\label{eq:lim}
\end{eqnarray}

Since $E\left[(\tilde{\psi}_i^k)^2|\theta\right]$ is bounded as shown by Lemma \ref{lemma:soft_dec}, Lindeberg condition \cite{feller} holds and $\frac{1}{\sigma_{\tilde{\psi}}(\theta)}\sum_{i\in S_j^k \cup S_l^k} Z_i^{jl}\tilde{\psi}_i^k$ tends to a standard Gaussian random variable by Lindeberg central limit theorem \cite{feller}.  Therefore, from \eqref{eq:tilde}, we have
\begin{eqnarray}
&&\lim_{N\to\infty}P_{e,j}^k(\theta)\nn\\
&\leq&\lim_{N\to\infty}\sum_{0\leq l\leq M-1, l\neq j}P\Bigg\{\frac{1}{\sigma_{\tilde{\psi}}(\theta)}\sum_{i\in S_j^k \cup S_l^k} Z_i^{jl}\tilde{\psi}_i^k<\nn\\
&&-\frac{1}{\sigma_{\tilde{\psi}}(\theta)}\sum_{i\in S_j^k \cup S_l^k}Z_i^{jl}E[\psi_i^k|\theta]\bigg|\theta\Bigg\} \\
&=&\sum_{0\leq l\leq M-1, l\neq j}\lim_{N\to\infty}Q\left(\frac{1}{\sigma_{\tilde{\psi}}(\theta)}\sum_{i\in S_j^k \cup S_l^k}Z_i^{jl}E[\psi_i^k|H_j^k]\right).\nn
\end{eqnarray}

Since, for a fixed $\theta$, $\sigma_{\tilde{\psi}}(\theta)$ will grow slower than $\sum_{i\in S_j^k \cup S_l^k}Z_i^{jl}E[\psi_i^k|\theta]$ when $\sum_{i\in S_j^k \cup S_l^k}Z_i^{jl}E[\psi_i^k|\theta] \to \infty$, $\lim_{N\to \infty}P_{e,j}^k(\theta) = 0$ for all $\theta$. Hence, $\lim_{N\to\infty}P_{e,\text{max}}^k = 0$ and from \eqref{eq:lim}, $\lim_{N\to\infty}P_D = 1$ for all finite noise variance.
\end{proof} 

Note that the detection probability of the proposed scheme can approach `1' even for extremely bad channels with very low channel capacity. This is true because, when $M$ increases sub-linearly with $N$, i.e., when $\frac{N}{M^{k+1}} \to \infty$ as $N \to \infty$ for $k=0,\cdots,k^{stop}$, as $N$ approaches infinity, the code rate of the code matrix approaches zero. Hence, even for extremely bad channels, the code rate is still less than the channel capacity.

\subsection{Numerical Results}
In this section, we present some numerical results which show the improvement in the system performance when soft-decision decoding rule is used instead of the hard-decision decoding rule in the presence of Byzantines and non-ideal channels. As defined before, $\alpha$ represents the fraction of Byzantines and we evaluate the performance of the basic coding approach with soft-decision decoding at the FC. We simulate the scenario with following system parameters: $N=512$, $M=4$, $A=8^2=64$~sq.~units, $P_0=200$, local sensor observations are corrupted with Gaussian noise with $\sigma=3$, $E_b=1$, $\sigma_f=3$ and $E[(h_i^k)^2]=1$ which corresponds to $\sigma_h^2=1-\frac{\pi}{4}$. The basic coding approach is stopped after $k^{stop}=2$ iterations. Note that in the presence of non-ideal channels, $\alpha_{blind}$ is less than $0.5$ since the non-ideal channels add to the errors at the FC. The number of Byzantine faults which the network can handle reduces and is now less than $0.5$. In our simulations, we observe that the performance of the schemes completely deteriorates when $\alpha \to 0.4$ (as opposed to $0.5$ observed before) and, therefore, we plot the results for the case when $\alpha \leq 0.4$

Fig. \ref{MSE_SD} shows the reduction in mean square error when the soft-decision decoding rule is used instead of the hard-decision decoding rule. Similarly, Fig. \ref{Pd_SD} shows the improvement in target region detection probability when the soft-decision decoding rule is used. The plots are for $5 \times 10^3$ Monte-Carlo simulations.

\begin{figure}[htb]
\centering
\includegraphics[width = 3.5in,height=!]{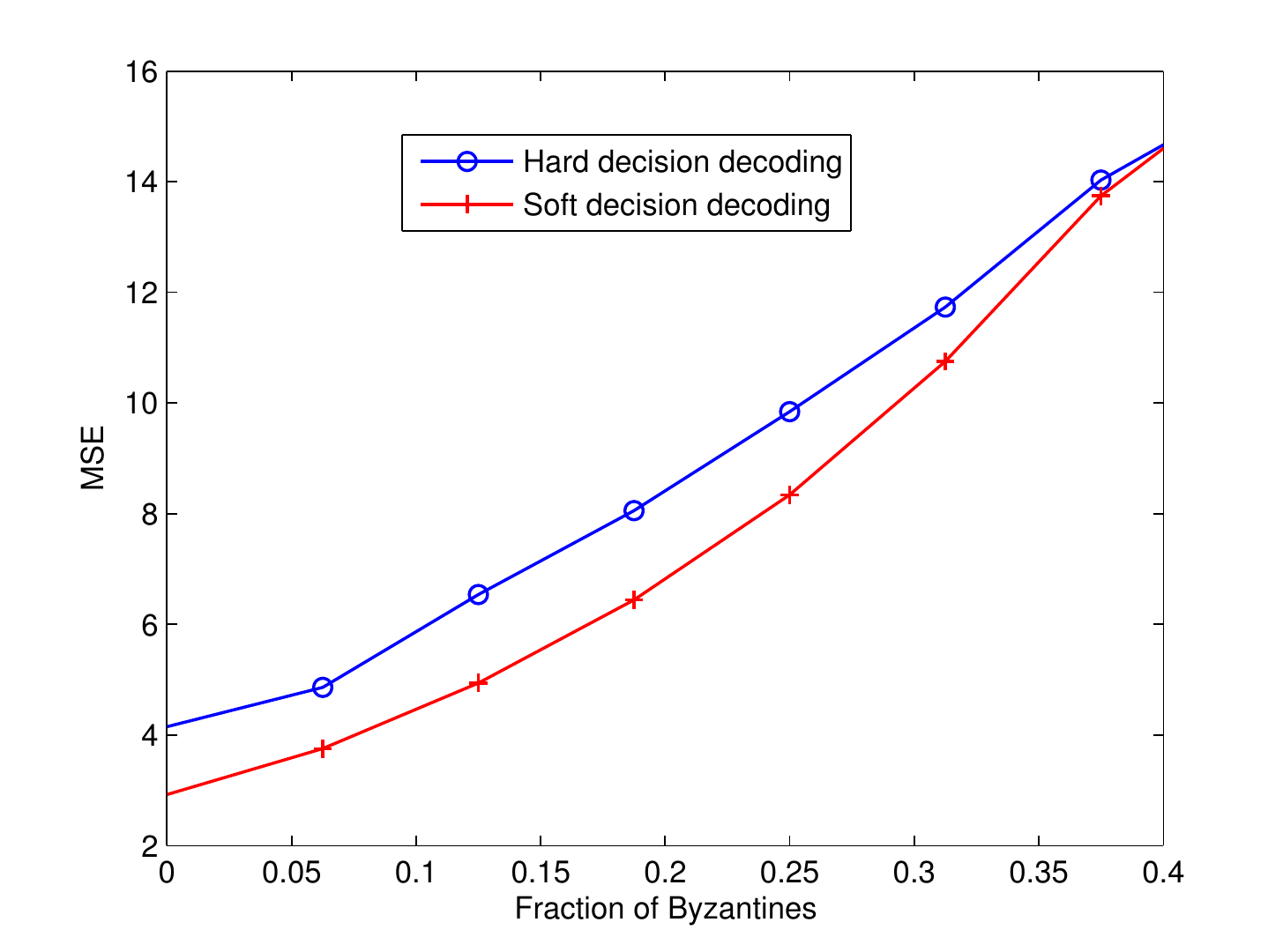}
\caption{MSE comparison of the basic coding scheme using soft- and hard- decision decoding}
\label{MSE_SD}
\end{figure}

\begin{figure}[htb]
\centering
\includegraphics[width = 3.5in,height=!]{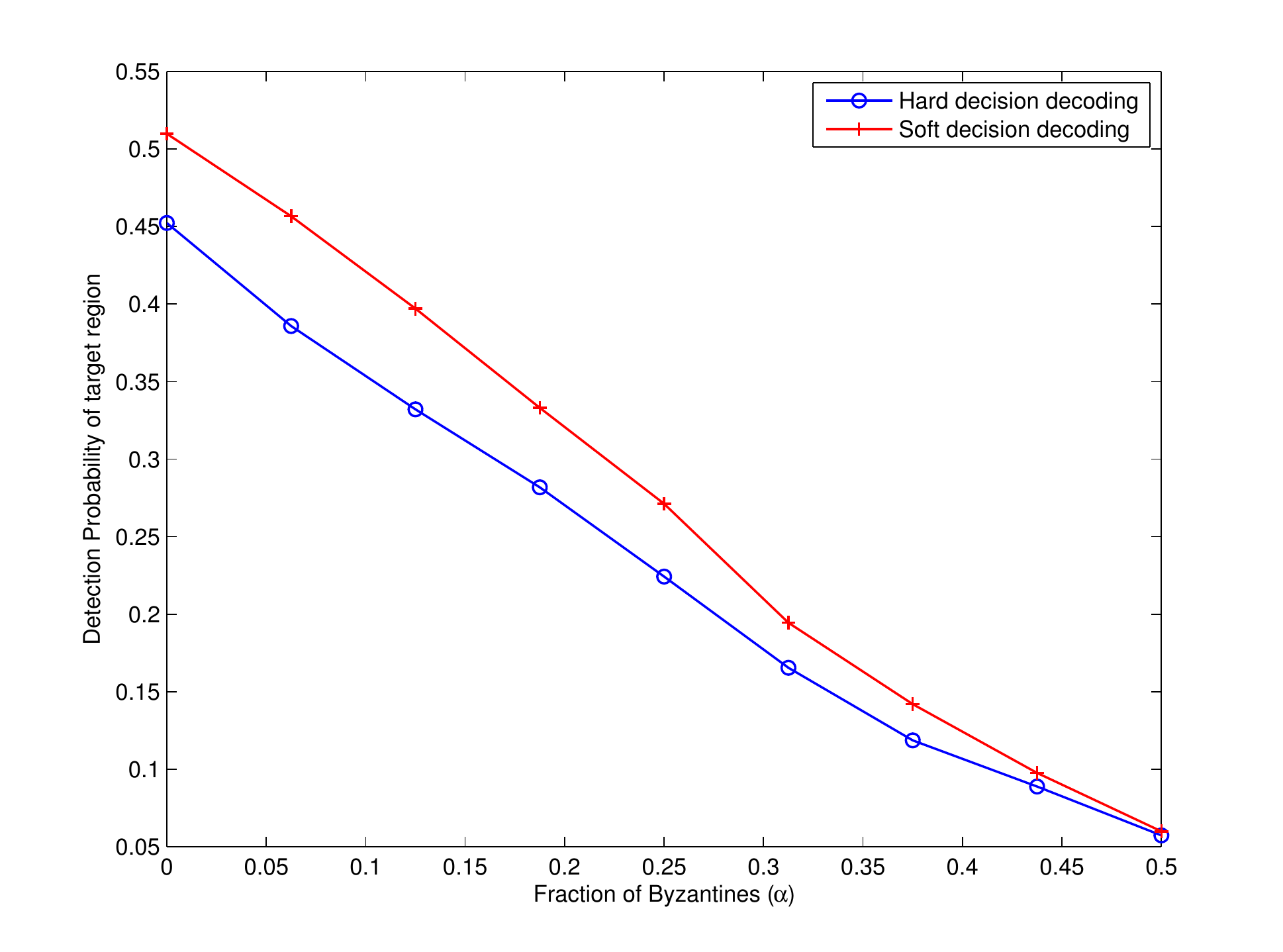}
\caption{Probability of detection of target region comparison of the basic coding scheme using soft- and hard- decision decoding}
\label{Pd_SD}
\end{figure}

As the figures suggest, the performance deteriorates in the presence of non-ideal channels. Also, the performance worsens with an increase in the number of Byzantines. The performance can be improved by using the exclusion method based coding approach as discussed in Section \ref{sec:byz} in which two regions are stored after every iteration. Figs. \ref{MSE_exclusive_SD} and \ref{Pd_exclusive_SD} show this improved performance as compared to the basic coding approach. Note that the exclusion method based coding approach also follows the same trend as the basic coding approach with soft-decision decoding performing better than hard-decision decoding.

\begin{figure}[htb]
\centering
\includegraphics[width = 3.5in,height=!]{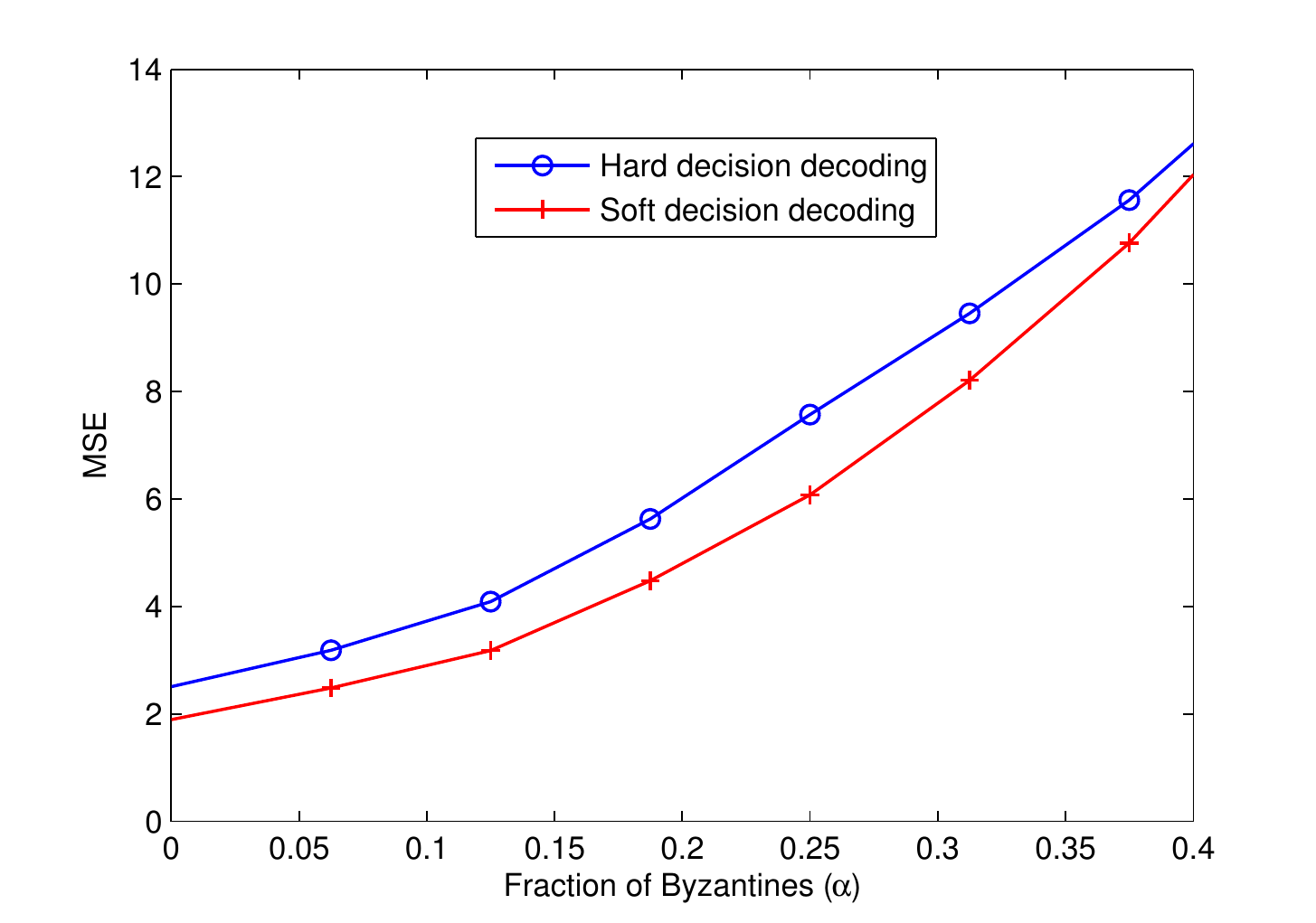}
\caption{MSE comparison of the exclusion coding scheme using soft- and hard- decision decoding}
\label{MSE_exclusive_SD}
\end{figure}

\begin{figure}[htb]
\centering
\includegraphics[width = 3.5in,height=!]{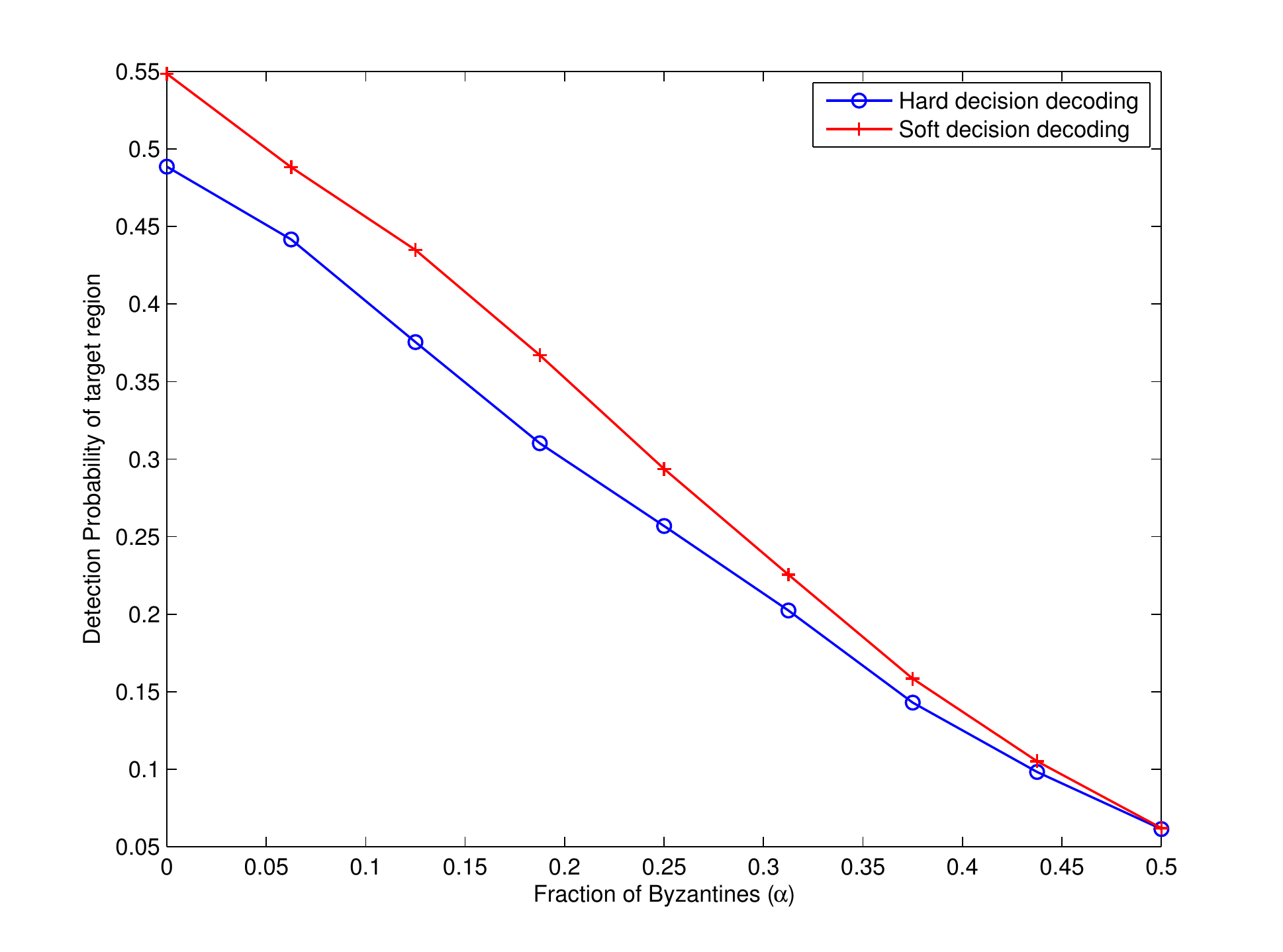}
\caption{Probability of detection of target region comparison of the exclusion coding scheme using soft- and hard- decision decoding}
\label{Pd_exclusive_SD}
\end{figure}

In our theoretical analysis, we have shown that the probability of region detection asymptotically approaches `1' irrespective of the finite noise variance. Fig.~\ref{fig:Pd_sig_f} presents this result that the region detection probability approaches `1' as the number of sensors approach infinity. Observe that for a fixed noise variance, the region detection probability increases with increase in the number of sensors and approaches `1' as $N \to \infty$. However, as $\sigma_f$ increases, the convergence rate decreases. For example, when $\sigma_f=1.5$, $N=4096$ is large enough to have $P_D$ close to 0.9. However, for $\sigma_f=4$, $N=4096$ results in $P_D=0.65$ which is not very large. It is expected that $P_D \to 1$ much later for $\sigma_f=4$ and, therefore, the convergence rate is less compared to when $\sigma_f=1.5$.

\begin{figure}[htb]
\centering
\includegraphics[width = 3.5in,height=!]{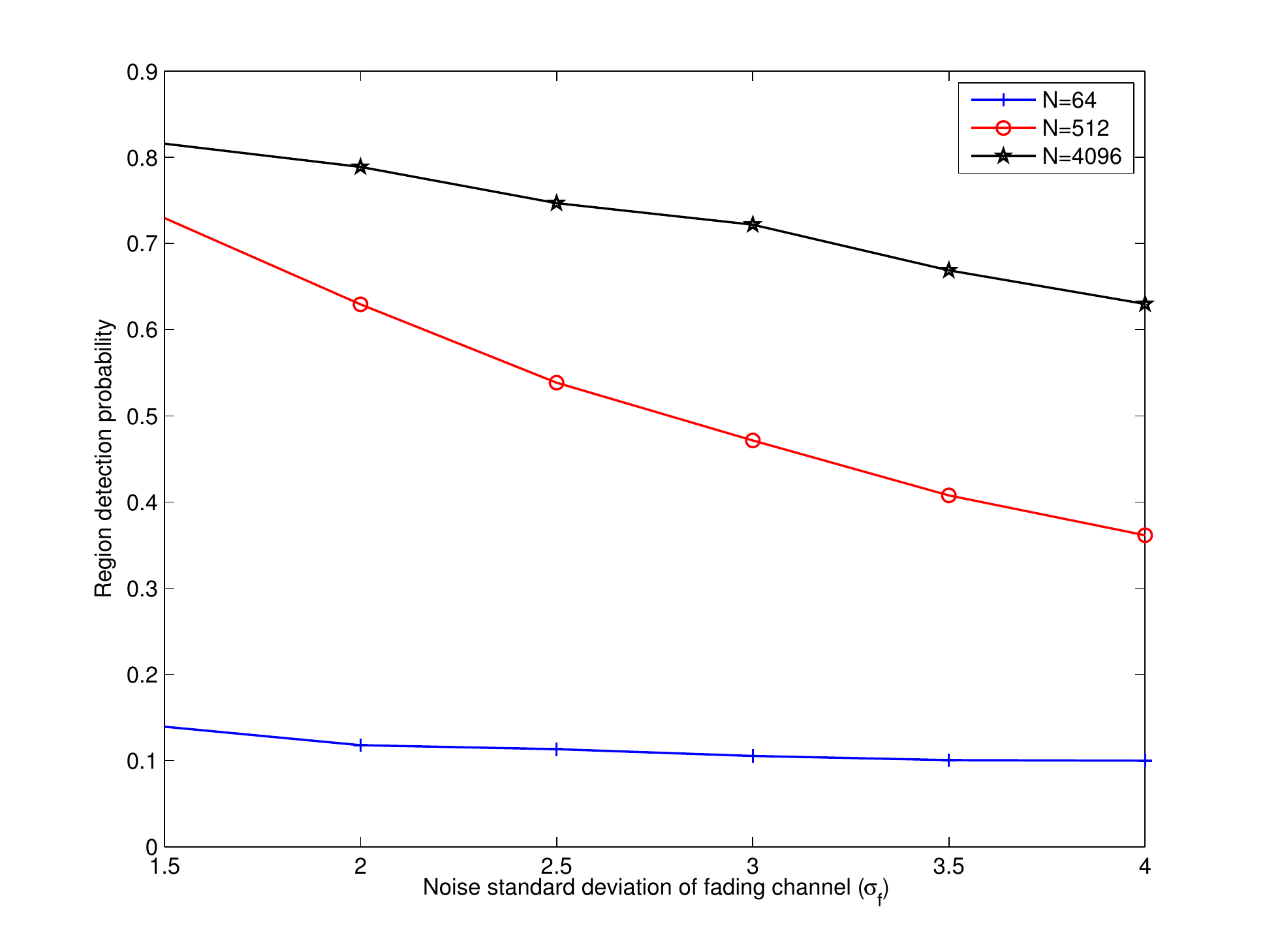}
\caption{Probability of detection of target region of the exclusive coding scheme using soft- decision decoding with varying number of sensors ($N$)}
\label{fig:Pd_sig_f}
\end{figure}

\section{Conclusion}
\label{conc}
In this paper, we considered the problem of target localization in wireless sensor networks. Traditionally, research has focused on conventional maximum likelihood approaches for estimating the target location. However, maximum likelihood based approaches are computationally very expensive. To reduce the computational complexity, we proposed a novel coding theory based technique for target localization. Modeling the estimation problem as an iterative classification problem, we can determine a coarse estimate of the target location in a computationally efficient manner. This efficiency in terms of computation becomes important in a scenario when the target is not stationary. The proposed scheme estimates the target location iteratively using $M$-ary classification at each iteration. We provided the theoretical analysis of the proposed scheme in terms of the detection probability of the target region. Considering the presence of Byzantines (malicious sensors) in the network, we modified our approach to increase the fault-tolerance capability of the coding scheme used. This approach, called the exclusion method based approach, is more tolerant to the presence of Byzantines than the basic coding scheme. We showed with simulations that the exclusion method based scheme provides an accurate estimate of the target location in a very efficient manner than the traditional MLE based scheme and also has a better Byzantine fault tolerance capability. We also considered the effect of non-ideal channels between local sensors and the fusion center. To minimize the effects of these non-ideal channels, we proposed soft-decision decoding at the fusion center. We showed with simulations, the improvement in performance of soft-decision decoding rule based scheme over hard-decision decoding rule based scheme in the presence of non-ideal channels. In the future, we plan to extend our work by relaxing Assumption \ref{assumption} and to also derive the convergence rates using Berry-Essen inequalities. One can also extend this work to the case of target tracking when the target's location is changing with time and the sensor network's aim is to track the target's motion. The proposed schemes provide an insight on $M$-ary search trees and show that the idea of coding based schemes can also be used for other signal processing applications. For example, the application involving `search' such as rumor source localization in social networks.


\begin{IEEEbiography}[{\includegraphics[width=1in, height=1.25in,clip,keepaspectratio]{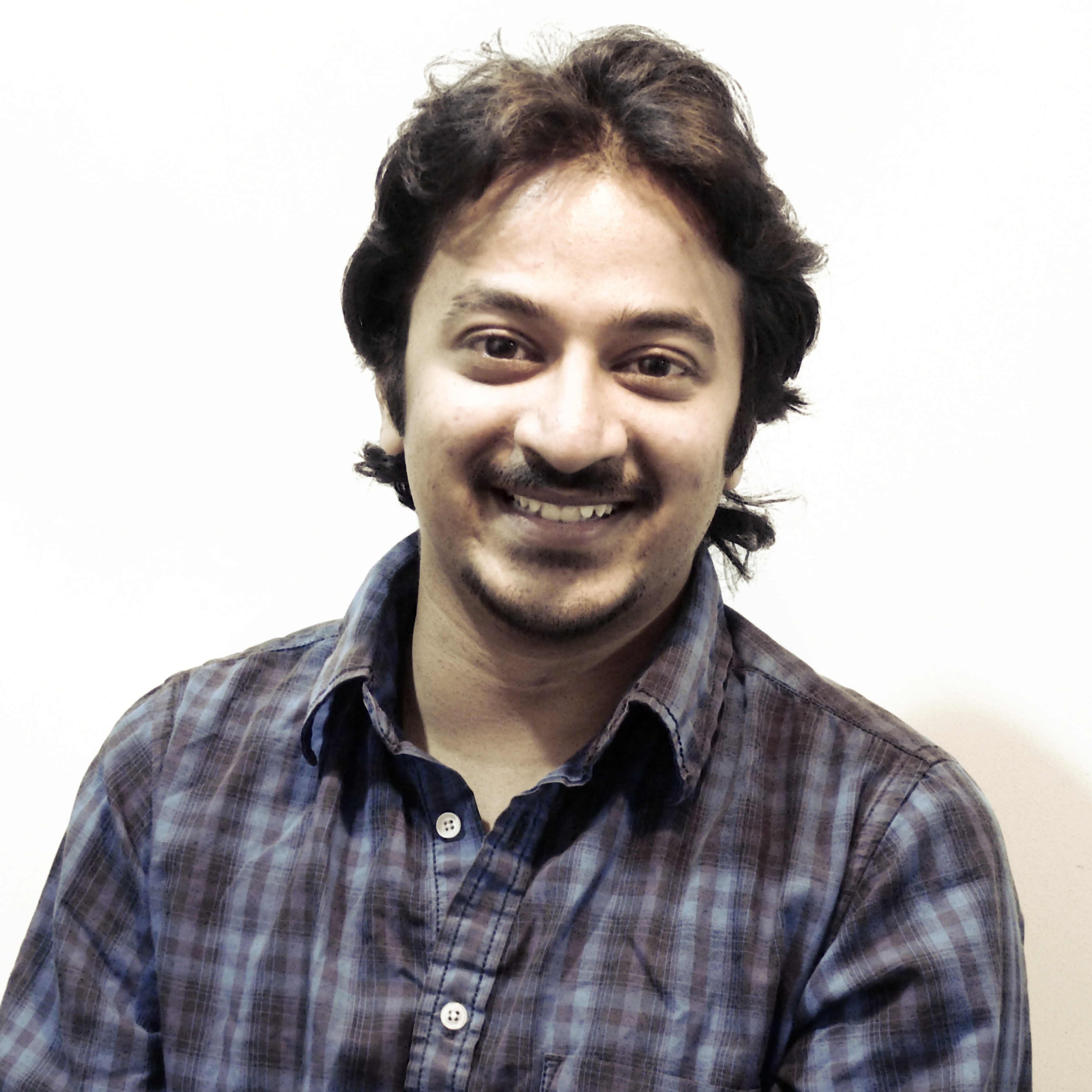}}]
{Aditya Vempaty}(S'12) was born in Hyderabad, India, on August 3, 1989. He received the B. Tech degree in electrical engineering from Indian Institute of Technology (IIT), Kanpur, India, in 2011 with academic excellence awards for consecutive years. Since 2011, he has been working towards his Ph.D. degree in electrical engineering at Syracuse University.

He is a Graduate Research Assistant at the Sensor Fusion Laboratory where he was also an Undergraduate Research Intern during summer of 2010. He was a Graduate Research Intern in the Data Systems Group at Intel Corporation in Santa Clara, CA in summer 2013. His research interests include statistical signal processing, human decision making, security in sensor networks, and data fusion. He has been awarded the Syracuse University Graduate Fellowship award for the academic year 2013-14. He is a member of Phi Kappa Phi and Golden Key International Honor Society. 

\end{IEEEbiography}
\begin{IEEEbiography}[{\includegraphics[width=1in,height=1.25in,clip,keepaspectratio]{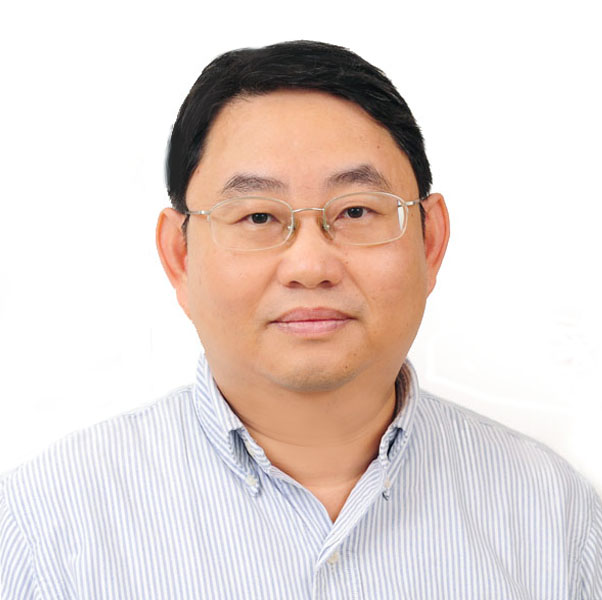}}]{Yunghsiang S.~Han}
(S'90--M'93--SM'08--F'11) was  born in Taipei, Taiwan, 1962. He received B.Sc. and M.Sc. degrees in electrical engineering from the
National Tsing Hua University, Hsinchu, Taiwan, in 1984 and 1986, respectively, and a Ph.D. degree from the School of Computer and
Information Science, Syracuse University, Syracuse, NY, in 1993.
 
He was from 1986 to 1988 a lecturer at Ming-Hsin Engineering College,
Hsinchu, Taiwan. He was a teaching assistant from 1989 to 1992, and a research associate in the School of Computer and Information
Science, Syracuse University from 1992 to 1993. He was, from 1993 to 1997, an Associate Professor in the Department of Electronic
Engineering at Hua Fan College of Humanities and Technology, Taipei Hsien, Taiwan. He was with the Department of Computer Science and
Information Engineering at National Chi Nan University, Nantou, Taiwan from 1997 to 2004. He was promoted to Professor in 1998. He
was a visiting scholar in the Department of Electrical Engineering at University of Hawaii at Manoa, HI from June to October 2001, 
the SUPRIA visiting research scholar in the Department of Electrical Engineering and Computer Science and CASE center at Syracuse
University, NY from September 2002 to January 2004 and July 2012 to June 2013, and the visiting scholar in the Department of Electrical 
and Computer Engineering at University of Texas at Austin, TX from August 2008 to June 2009. He was with the Graduate Institute of Communication Engineering at National
Taipei University, Taipei, Taiwan from August 2004 to July 2010. From August 2010, 
he is with the Department of Electrical Engineering at National Taiwan University of Science and Technology as Chair professor. 
His research interests are in error-control coding, wireless networks, and security.

Dr. Han was a winner of the 1994 Syracuse University Doctoral Prize and a Fellow of IEEE.
\end{IEEEbiography}

\begin{IEEEbiography}[{\includegraphics[width=1in, height=1.25in,clip,keepaspectratio]{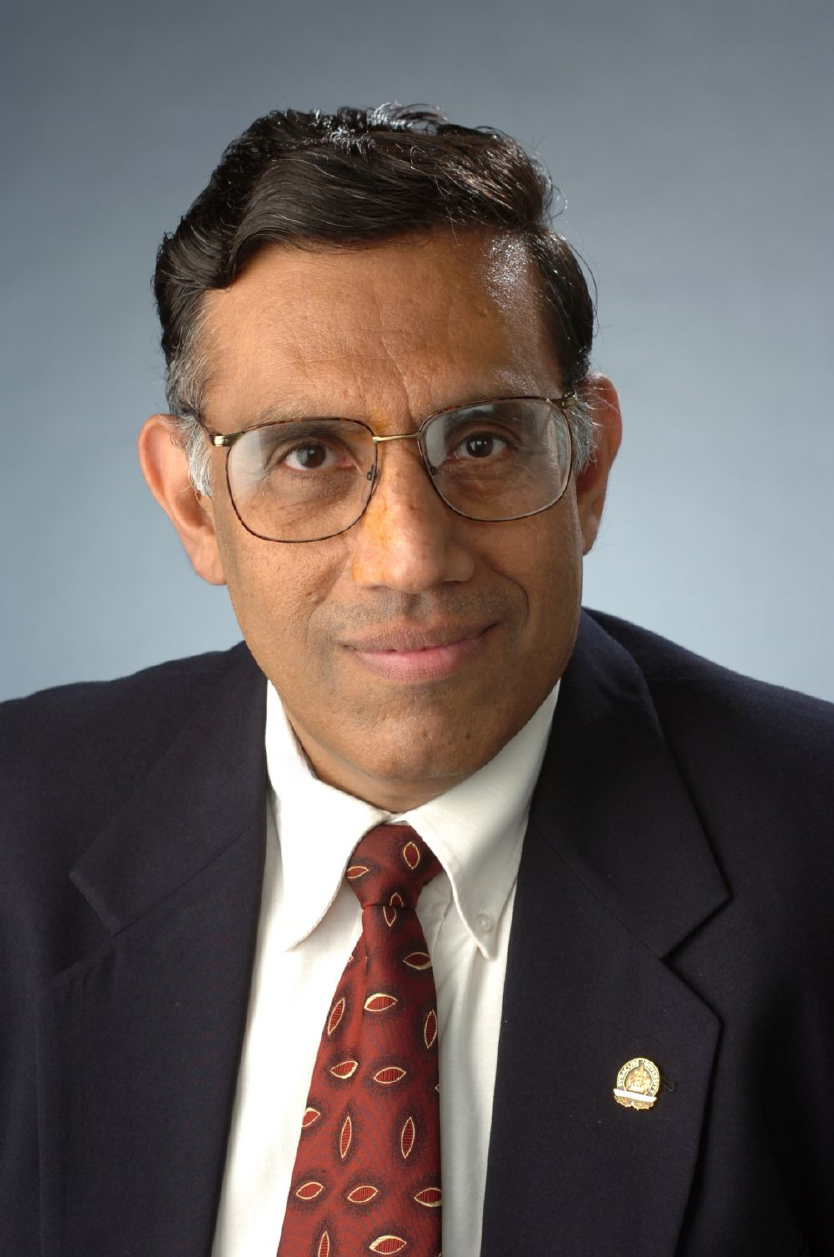}}]
{Pramod K. Varshney} (S'72-M'77-SM'82-F'97) was born in Allahabad, India, on July 1, 1952. He received the B.S. degree in electrical engineering and computer science (with highest honors), and the M.S. and Ph.D. degrees in electrical engineering from the University of Illinois at Urbana-Champaign in 1972, 1974, and 1976 respectively.

From 1972 to 1976, he held teaching and research assistantships with the University of Illinois. Since 1976, he has been with Syracuse University, Syracuse, NY, where he is currently a Distinguished
Professor of Electrical Engineering and Computer Science and the Director of CASE: Center for Advanced Systems and Engineering. He served as the Associate Chair of the department from 1993 to 1996. He is also an Adjunct Professor of Radiology at Upstate Medical University, Syracuse. His current research interests are in distributed sensor networks and data fusion, detection and estimation theory, wireless communications, image processing, radar signal processing, and remote sensing. He has published extensively. He is the author of Distributed Detection and Data Fusion (New York: Springer-Verlag, 1997). He has served as a consultant to several major companies.

Dr. Varshney was a James Scholar, a Bronze Tablet Senior, and a Fellow while at the University of Illinois. He is a member of Tau Beta Pi and is the recipient of the 1981 ASEE Dow Outstanding Young Faculty Award. He was elected to the grade of Fellow of the IEEE in 1997 for his contributions in the area of distributed detection and data fusion. He was the Guest Editor of the Special Issue on Data Fusion of the IEEE PROCEEDINGS January 1997. In 2000, he received the Third Millennium Medal from the IEEE and ChancellorÕs Citation for exceptional academic achievement at Syracuse University. He is the recipient of the IEEE 2012 Judith A. Resnik Award. He is on the Editorial Board of the Journal on Advances in Information Fusion. He was the President of International Society of Information Fusion during 2001.
\end{IEEEbiography}

\end{document}